\newtheorem{prop}{Proposition}
\newtheorem{lem}{Lemma}
\journal{Computational Statistics and Data Analysis}
\begin{document}

\begin{frontmatter}



\author[MM]{Matthieu Marbac}
\author[CB]{Christophe Biernacki}
\author[VV]{Vincent Vandewalle}

 \address[MM]{DGA \& Inria Lille \& University Lille 1}
 \address[CB]{University Lille 1 \& CNRS \& Inria Lille}
 \address[VV]{EA 2694 University Lille 2 \& Inria Lille}

\title{Finite mixture model of conditional dependencies modes to cluster categorical data.}

\address{}

\begin{abstract}
We propose a parsimonious extension of the classical latent class model to cluster categorical data by relaxing the class conditional independence assumption. Under this new mixture model, named Conditional Modes Model, variables are grouped into conditionally independent blocks. The corresponding block distribution is a parsimonious multinomial distribution where the few free parameters correspond to the most likely modality crossings, while the remaining probability mass is uniformly spread over the other modality crossings. Thus, the proposed model allows to bring out the intra-class dependency between variables and to summarize each class by a few characteristic modality crossings. The model selection is performed via a Metropolis-within-Gibbs sampler to overcome the computational intractability of the block structure search. As this approach involves the computation of the integrated complete-data likelihood, we propose a new method (exact for the continuous parameters and approximated for the discrete ones) which avoids the biases of the \textsc{bic} criterion  pointed out by our experiments. Finally, the parameters are only estimated for the best model via an \textsc{em} algorithm. The characteristics of the new model are illustrated on simulated data and on two biological data sets. These results strengthen the idea  that this simple model allows to reduce biases involved by the conditional independence assumption and gives meaningful parameters. Both applications were performed with the R package \texttt{CoModes}\footnote{Downloadable at https://r-forge.r-project.org/R/?group\_id=1809} where the proposed model is implemented.
\end{abstract}

\begin{keyword}
categorical data\sep clustering\sep Metropolis-within-Gibbs sampler\sep integrated complete-data likelihood\sep mixture models\sep model selection.

\MSC 62H30 \sep 62F15 \sep 62-07 \sep 62F07.
\end{keyword}

\end{frontmatter}


\section{Introduction}
Clustering \citep{Jaj02} is an important tool for practitioners confronted with a complex data set. Indeed, this method allows to extract main information from data by grouping individuals into \emph{homogeneous} classes. This paper focuses on categorical variables clustering. Even if these variables are little informative, they are present in many different fields (biology, sociology, marketing...) because they are usually easily accessible.

\bigskip
Clustering methods can be split into two approaches: the \emph{geometrical} ones based on the distances between individuals and the \emph{probabilistic} ones which model the data generation. 

If these first approaches are generally simpler and faster than the others, they are sensitive to the used distance between individuals.  Geometrical approaches, to cluster categorical data, either define a metric in the initial variable space like the \emph{k-means} \citep{Hua05}, either compute their metric on the axes of the multiple correspondence analysis \citep{Cha10,Gui01}. Indeed, these approaches consider that the classes are \emph{homogeneous} when the distance between the individuals of the class and its center is small. However, lots of geometrical approaches can be interpreted as probabilistic ones \citep{Gov10} revealing probabilistic hidden assumptions made by the geometrical ones. Moreover, the probabilistic approaches allow to solve difficult questions, like the class number selection, in a rigorous mathematical framework.

The probabilistic approaches consider that the classes are \emph{homogeneous} when the individuals of a class are drawn from the same distribution. Thus, finite mixture models, which are the most classical of the probabilistic approaches, meet this objective by approaching the data distribution with a finite mixture of parametric distributions \citep{Mcl00}. In addition, the obtained partition is meaningful since each class is described by the parameters of the corresponding component.

\bigskip
The most widely used mixture model to cluster categorical data sets is the latent class model \citep{Goo74,Cel91,Bie10}, which assumes the conditional independence between variables. In this article, we refer to this model as \emph{Conditional Independence Model} (further mentioned as \textsc{cim}). In this model, as the classes are explicitly described by the probability of each modality for each variable, the interpretation is easy. Moreover, the sparsity caused by the conditional independence assumption is a great advantage since it circumvents the curse of dimensionality. In practice, this model obtains good results in lots of applications \citep{Han01,Reb06,Str06}. However,  applications \citep{Van09} can show that \textsc{cim} overestimates the class number when the conditional independence assumption is violated (see also our experiments presented in Section~\ref{sec::simul}). Furthermore, the larger is the number of variables, the higher is the risk to observe conditionally correlated variables in a data set, and consequently the higher is the risk to involve such biases by using \textsc{cim}.

Different models relax the conditional independence assumption. Among them, the \emph{multilevel latent class model} \citep{Ver03,Ver07} assumes that the conditional dependency between the observed variables can be explained by other unobserved variables. This model has connections with the approach modeling the intra-class dependencies by  using a latent continuous variable and a probit function  \citep{Qu96}. Recently proposed, the \emph{mixture of latent trait analyzers} \citep{Gol13,Bar11} is a good challenger for \textsc{cim} since it assumes that the categorical variable distribution depends on many latent variables: one categorical variable (the class) and many continuous latent variables (modeling the intra-class dependencies between the observed categorical variables). However, the parameters are hardly estimated directly, so the authors use a variational approach. Furthermore, the intra-class dependencies can be hardly interpretable by the practitioner, since the correlations are interpreted according to relationships with unobserved continuous variables.

The log-linear models  \citep{Agr02,Boc86} purpose is to model the individual log-probability by selecting interactions between variables. Thus, the most general mixture model is the \emph{log-linear mixture model} where all the kinds of interactions can be considered. It has been used for a long time \citep{Hag88}  and it obtains good results in many applications like the clustering of radiographic cross-diagnostics \citep{Esp89} or in a market segmentation \citep{Van09}. However this model family is huge and the model selection is a real challenge. In the literature, authors fix by advance the modeled interactions or they perform a deterministic search like the \emph{forward} method which is sub-optimal. Furthermore, the number of parameters increases with the conditional modality crossings, so there is an over-fitting risk and the interpretation becomes harder. 

\bigskip
In this paper, we propose a sparse mixture model relaxing the conditional independence assumption to overcome the biases caused by \textsc{cim}. This new model, named \emph{Conditional Modes Model} (refered in this article by \textsc{cmm}), groups the variables into \emph{conditionally independent blocks}, allowing to consider the main conditional dependencies. Such an idea was already proposed to cluster continuous and categorical data in the Multimix software \citep{Jor96,Hun99}. However, the specific distribution of the block that we adopt here is a multinomial per modes distribution which assumes that few modality crossings, named \emph{modes}, are characteristic and that the other ones follow a uniform distribution. Thus, the associated multinomial distribution is parsimonious, its free parameters being limited to the few parameters of the modes.

This simple mixture model (\textsc{cmm}) is a good challenger for the mixture model with conditional independence assumption (\textsc{cim}), since it preserves the sparsity and avoids many biases through modeling of the main conditional correlations. It can be also interpreted as a parsimonious version of the log-linear mixture model. Indeed, the repartition of the variables into blocks defines the considered interactions while the mode distribution into blocks defines a specific distribution for each interaction. Furthermore, resulting classes are meaningful since the intra-class dependencies are brought out at two complementary levels: the block variable interaction level and the associated mode interaction level (through locations and probabilities). Note that \textsc{cmm} is a comprehensive approach since it includes \textsc{cim} and a part of its parsimonious versions \citep{Cel91}.

For a fixed model (class number, repartition of the variables into blocks and mode numbers), the maximum likelihood estimate is obtained via an \textsc{em} algorithm. 
The model selection is performed via a Metropolis-within-Gibbs sampler generating a new block variable repartition into blocks and new mode numbers by a Metropolis-Hastings step. It is performed for a fixed number of classes and avoids combinatorial problems involved by the selection of the blocks of variables and of the mode number. It is based on the fact that the integrated complete-data likelihood, required for the acceptance probability computation of the Metropolis-Hastings inside the Gibbs sampler, is accessible and non ambiguous through weekly informative conjugate prior. Finally, this approach has two main advantages. It allows to reduce the bias of the \textsc{bic}-like approach (the overestimation of the number of modes by this approach is illustrated during our numerical experiments). Furthermore, it allows to perform an efficient model selection in a reasonable computational time since the parameters are only estimated for the unique selected model. Thus, this approach is a possible answer to the combinatorial model selection problem which is known to be a real challenge for a log-linear mixture model.

\bigskip
This paper is organized as follows. Section~\ref{sec::pres} presents the Conditional Modes Model. Section~\ref{sec::param} is devoted to maximum likelihood estimation via an \textsc{em} algorithm. Section~\ref{sec::model} presents the Metropolis-within-Gibbs sampler performing the model selection through the integrated complete-data likelihood. In Section~\ref{sec::simul}, we show that the proposed approach computing the integrated complete-data likelihood sharply reduces the biases of the \textsc{bic}-like approach and we numerically underline both the good behavior of the Metropolis-within-Gibbs sampler and the flexibility of \textsc{cmm} on simulated data. Section~\ref{sec::appli} presents two clusterings of biological data sets performed by the R package \texttt{CoModes}\footnote{Downloadable at https://r-forge.r-project.org/R/?group\_id=1809}. A conclusion is drawn and future extensions are discussed in Section~\ref{sec::concl}.

\section{Conditional modes model\label{sec::pres}}

\subsection{Conditional modes model framework}
Observations are described with \textsc{b} categorical variables $\textbf{x}=(\textit{\textbf{x}}^1,\ldots,\textit{\textbf{x}}^{\textsc{b}})$ using the complete disjunctive coding, where $\textit{\textbf{x}}^b$ has $m_b$ modalities. Let a partition $\boldsymbol{\sigma}=(\boldsymbol{\sigma}_1,\ldots,\boldsymbol{\sigma}_d)$ of $\{1,\ldots,\textsc{b}\}$ determining a repartition of variables in $d$ blocks. The $j$-th block is also denoted by 
\begin{equation}\label{data}
\textbf{x}^{j}=\{\textit{\textbf{x}}^b;b\in\boldsymbol{\sigma}_j\}.
\end{equation}
We still adopt the disjunctive coding $\text{x}^{jh}=1$ if the individual takes the modality crossing $h$ and $\text{x}^{jh}=0$ otherwise, where $h\in\{1,\ldots,\text{m}_j\}$. Thus, $\textbf{x}^{j}$ corresponds to a new categorical variable having $\text{m}_j$ modalities, $\text{m}_j$ being the number of modality crossings of the initial variables affected into the block $j$ determined by $\text{m}_j=\prod_{b\in\boldsymbol{\sigma}_j} m_b$. 

The proposed model, below refered as Conditional Modes Model (\textsc{cmm}), assumes that data arise independently from a mixture of $g$ components of \emph{independent blocks of initial variables}, the intra-block distribution being \emph{multinomial per modes}. In each class, the multinomial distribution of $\textbf{x}^{j}$ is also assumed to have few free parameters in comparison to the number of modalities $\text{m}_j$ and corresponding to the \emph{modes} of the distribution. More precisely, they are defined as the locations of the largest probabilities, while the other parameters are equal. A particular model is denoted by $\boldsymbol{\omega}=(g,\boldsymbol{\sigma},\boldsymbol{\ell})$ where $\boldsymbol{\ell}=(\boldsymbol{\ell}_1,\ldots,\boldsymbol{\ell}_{g})$ groups all the mode numbers with $\boldsymbol{\ell}_{k}=(\ell_{k1},\ldots,\ell_{kd})$, $\ell_{kj}$ being the number of modes of $\textbf{x}^{j}$ for the class $k$ $(0<\ell_{kj}<\text{m}_j)$.

\subsection{Probability distribution functions of the conditional modes model}
 Using $p(.;.)$ as a generic notation for the probability distribution function (pdf), for a known model $\boldsymbol{\omega}$, the \textsc{cmm}'s pdf can be written as
\begin{equation}
p(\textbf{x} ; \boldsymbol{\theta}, \boldsymbol{\omega})=\sum_{k=1}^g \pi_k p(\textbf{x} ; \boldsymbol{\alpha}_k, \boldsymbol{\sigma}, \boldsymbol{\ell}_k),
\end{equation}
where $\boldsymbol{\theta}=(\boldsymbol{\pi},\boldsymbol{\alpha})$ denotes the whole mixture parameters: $\boldsymbol{\pi}=(\pi_1,\ldots,\pi_g)$ is the vector of class proportions with $0<\pi_k\leq 1$ and $\sum_{k=1}^g \pi_k=1$, and $\boldsymbol{\alpha}=(\boldsymbol{\alpha}_1,\ldots,\boldsymbol{\alpha}_g)$ groups the parameters of the multinomial distributions with $\boldsymbol{\alpha}_k=(\boldsymbol{\alpha}_{k1},\ldots,\boldsymbol{\alpha}_{kd})$ and $\boldsymbol{\alpha}_{kj}=(\alpha_{kj1},\ldots,\alpha_{kj\text{m}_j})$,  $\alpha_{kjh}$ being the probability that $\text{x}^{jh}=1$ conditionally to the component $k$.

 We also define the mapping $\tau_{kj}$ from $\{1,\ldots,\text{m}_j\} $ to $\{1,\ldots,\text{m}_j\} $ 
which orders the modalities of $\textbf{x}^j$ by decreasing values of the probabilities $\alpha_{kjh}$. For instance, $\tau_{kj}(1)$ gives the modality of $\textbf{x}^j$ having the largest probability $\alpha_{kjh}$. By using the shorter notation $\alpha_{kj(h)}=\alpha_{kj\tau_{kj}(h)}$, we have $\alpha_{kj(h)}\geq\alpha_{kj(h+1)}$ $(1\leq h<\text{m}_j)$. Furthermore, since the multinomial distribution of $\textbf{x}^j$ has $\ell_{kj}$ modes then $\boldsymbol{\alpha}_{kj}$ is defined in the constrained simplex $S(\ell_{kj},\text{m}_j)$ where
\begin{equation}
S(\ell_{kj},\text{m}_j)=\left\{\boldsymbol{\alpha}_{kj}:\; \sum_{h=1}^{\text{m}_j} \alpha_{kjh}=1,\; \alpha_{kj(\ell_{kj}+1)}=\ldots=\alpha_{kj(\text{m}_j)}\right\}. \label{simplex}
\end{equation}
In other words, uniformity holds for non-mode modalities.
By using the other shorter notation $\text{x}^{j(h)_k}=\text{x}^{j\tau_{kj}(h)}$, the conditional independence assumption between blocks involves the following pdf for the component $k$
\begin{equation}
p(\textbf{x} ; \boldsymbol{\alpha}_k, \boldsymbol{\sigma}, \boldsymbol{\ell}_k)=\prod_{j=1}^d \prod_{h=1}^{\text{m}_j} \left(\alpha_{kjh} \right)^{\text{x}^{jh}}=\prod_{j=1}^d \prod_{h=1}^{\text{m}_j} \left(\alpha_{kj(h)} \right)^{\text{x}^{j(h)_k}}.
\end{equation}
Thus, \textsc{cim} is included in \textsc{ccm}, since the conditional independence assumption between the initial variables is defined by putting $d=\textsc{b}$ and $\ell_{kj}=m_{j}-1$. Indeed, in such case, each variable built a block, so $\boldsymbol{\sigma}=(\{1\},\ldots,\{\textsc{b}\})$.

\subsection{Conditional modes model characteristics}
\textsc{cmm} is meaningful with its two levels of interpretation. Firstly, the intra-class dependencies of variables (equal between classes) are brought out by the repartition of the variables into blocks given by $\boldsymbol{\sigma}$. Secondly, the intra-class and intra-block dependencies of modalities (possibly different between classes) are summarized by the modes (locations and probabilities). A shorter summary for each distribution is also available by using the following compact terms, defined on $[0,1]$ and respectively reflecting the \emph{complexity} and the \emph{strength} of the intra-class and intra-block dependencies
\begin{equation}
\kappa_{kj}=\frac{\ell_{kj}}{\text{m}_j-1} \text{ and } \rho_{kj}=\sum_{h=1}^{\ell_{kj}}\alpha_{kj(h)}. \label{summ}
\end{equation}
For instance, the smaller is $\kappa_{kj}$ and the larger is $\rho_{kj}$, the more  massed in few characteristic modality crossings is the distribution, since the modes are interpreted as an over-contribution at the uniform distribution among all the modality crossings.

Note that the repartition of the variables into conditionally independent blocks identical between classes assures the model generic identifiability. Indeed, with this constraint, the results of \citep{All09} can be applied to prove the generic identifiability of the \textsc{cmm} (details are given in \ref{ident}). Despite the constraint of the same repartition of the variables into blocks for all the classes, the model stays flexible because of the specific block distribution.

The main idea of the former sparse versions of \textsc{cim} (classical conditional independence model) proposed by \cite{Cel91} is to estimate only one mode for each multinomial distribution of the initial variable. Different constraints of equality are then added between the variables and/or classes. In fact, many of these models are included in the model family of \textsc{cmm} by putting $d=\textsc{b}$ and $\ell_{kj}=1$. In addition, as \textsc{cmm} needs $\nu=(g-1) + \sum_{k=1}^g \sum_{j=1}^{d} \ell_{kj}$ parameters, models of \textsc{cmm}'s family can need less parameters than \textsc{cim}---having $(g-1) + g \times\sum_{b=1}^{\textsc{b}} (m_{b}-1)$ parameters---although it takes into account the conditional dependencies.

\subsection{New parametrization of the block distribution} \label{sec::reparam}
The parsimonious versions of \textsc{cim} introduced by \cite{Cel91} are meaningful since each multinomial is written with two parameters: one discrete giving the location of the mode of the distribution and one continuous giving its probability. By using the same idea, we propose a new parametrization of the block distribution,  denoted by $(\boldsymbol{\delta}_{kj},\boldsymbol{a}_{kj})$, which facilitates the interpretation and the writing of the prior and posterior distributions of the block parameters (see Section~\ref{sec::model}). The discrete parameter $\boldsymbol{\delta}_{kj}=\{\delta_{kjh};h=1,\ldots,\ell_{kj}\}$ determines the mode locations, since $\delta_{kjh}$  indicates the modality crossing where the mode $h$ is located, with $\delta_{kjh}\neq\delta_{kjh'}$ if $h\neq h'$ and $\delta_{kjh} \in \{1,\ldots,\text{m}_j\}$. The continuous parameter $\boldsymbol{a}_{kj}=(a_{kjh};h=1,\ldots,\ell_{kj}+1)$ determines the probability mass of the $\ell_{kj}$ modes by its first $\ell_{kj}$ elements ($a_{kjh}$ with $h=1,\ldots,\ell_{kj})$ and the probability mass of the non-mode by its last element ($a_{kj\ell_{kj}+1}$). This parameter is defined on the truncated simplex involving $a_{kjh}\geq \frac{a_{kj\ell_{kj}+1}}{\text{m}_j -\ell_{kj}}$ $(h\leq \ell_{kj})$. The parameter $\boldsymbol{\alpha}_{kj}$ and the couple $(\boldsymbol{\delta}_{kj},\boldsymbol{a}_{kj})$ are linked by
\begin{equation} \label{dirac}
\alpha_{kjh} = \left\{ \begin{array}{rl}
a_{kjh'} & \text{if } \exists h'\text{ such that } \delta_{kjh'}=h \\
\frac{a_{kj\ell_{kj}+1}}{\text{m}_j - \ell_{kj}} & \text{otherwise.}
\end{array}
\right.
\end{equation}

\section{Maximum likelihood estimate \label{sec::param}}
The whole data set consisting of $n$ independent and identically distributed individuals is denoted by $\textbf{X} = (\textbf{x}_{1},\ldots,\textbf{x}_{n} )$. Remark that $\textbf{X}$ denotes the whole observed sample and not a random variable. The observed-data log-likelihood of \textsc{cmm} is also defined as
	\begin{equation}
L(\boldsymbol{\theta}; \mathbf{X},\boldsymbol{\omega})=\sum_{i=1}^{n} \ln   p(\textbf{x}_{i}; \boldsymbol{\theta},\boldsymbol{\omega}). \label{vrai}
	\end{equation}
Since we use \textsc{cmm} to cluster, the indicator vector of the $g$ classes denoted by $\textbf{Z}=(\textbf{z}_{i};i=1,\ldots,n)$ with $\textbf{z}_i=(\text{z}_{i1},\ldots,\text{z}_{ig})$ where $\text{z}_{ik}=1$ if the individual $\textbf{x}_i$ arises from the class $k$ and $\text{z}_{ik}=0$ otherwise, is considered as a missing variable. The complete-data log-likelihood of \textsc{cmm} is then defined by
\begin{align}
L(\boldsymbol{\theta}; \textbf{X},\textbf{Z},\boldsymbol{\omega})&=\sum_{i=1}^n\sum_{k=1}^g \text{z}_{ik} \ln \big( \pi_k p(\textbf{x}_{i}; \boldsymbol{\alpha}_k,\boldsymbol{\sigma},\boldsymbol{\ell}_k)\big).
\end{align}

For the mixture models, the direct optimization on $\boldsymbol{\theta}$ to obtain the maximum likelihood estimate (\textsc{mle}), denoted by $\hat{\boldsymbol{\theta}}$, involves solving equations having no analytical solution. So, we perform the parameter's estimation via an \textsc{em} algorithm \citep{Dem77,Mcl97}, which is often simple and efficient for the missing data situation. It is an iterative algorithm which alternates between two steps: the computation of the complete-data log-likelihood conditional expectation (\textsc{e} step) and its maximization (\textsc{m} step). 
At the iteration $[r]$, this algorithm is written as:\bigskip\\
\textbf{E step:} conditional probabilities computation \begin{equation} 
t_{ik}(\boldsymbol{\theta}^{[r]})=\frac{\pi_{k}^{[r]}p(\textbf{x}_i; \boldsymbol{\alpha}^{[r]}_k,\boldsymbol{\sigma},\boldsymbol{\ell}_k)}{\sum_{k'=1}^{g} \pi_{k'}^{[r]} p(\textbf{x}_i; \boldsymbol{\alpha}^{[r]}_{k'},\boldsymbol{\sigma},\boldsymbol{\ell}_{k'}) }.\label{Estep1}\end{equation}
\textbf{M step:} maximization of the complete-data log-likelihood 
\begin{equation}
\pi^{[r+1]}_k=\frac{n_k^{[r]}}{n}
\text{ and }
\alpha_{kj(h)}^{[r+1]}=\left\{ \begin{array}{rl}
\frac{n_{kj(h)}^{[r]}}{n_k^{[r]}} & \text{ if } (1\leq h \leq \ell_{kj}) \\
\frac{1-\sum_{h'=1}^{\ell_{kj}} \alpha_{kj(h')}^{[r+1]}}{\text{m}_j - \ell_{kj}} & \text{ otherwise,}
\end{array}
\right.
\label{Mstep1}
\end{equation}
by using the  notations  $n_k^{[r]}=\sum_{i=1}^nt_{ik}(\boldsymbol{\theta}^{[r]})$ and  $n_{kjh}^{[r+1]}=\sum_{i=1}^n t_{ik}(\boldsymbol{\theta}^{[r]}) \text{x}_{i}^{jh}$. Note that, at the M step of iteration $[r]$, the function $\tau_{kj}$ is redefined as the decreasing ordering function of the $n_{kjh}^{[r+1]}$ and allows to define $n_{kj(h)}^{[r+1]}$ with $n_{kj(h)}^{[r+1]}\geq n_{kj(h+1)}^{[r+1]}$.

\section{Model selection via Metropolis-within-Gibbs sampler} \label{sec::model}
The aim is to obtain the model $\hat{\boldsymbol{\omega}}$ having the largest posterior probability. We assume that $p(g)=\frac{1}{g_{\max}}$ for $g=1,\ldots,g_{\max}$ and that $p(\boldsymbol{\sigma})$ (remind that $g$ and $\boldsymbol{\sigma}$ are independent) and $p(\boldsymbol{\ell}|g,\boldsymbol{\sigma})$ follow uniform distributions. 
Let the $g_{\max}$ models denoted by $\boldsymbol{\omega}^{(g)}=(g,\boldsymbol{\sigma}^{(g)},\boldsymbol{\ell}^{(g)})$, for $g=1,\ldots,g_{\max}$, where $(\boldsymbol{\sigma}^{(g)},\boldsymbol{\ell}^{(g)}) =\underset{\boldsymbol{\sigma},\boldsymbol{\ell}}{\operatorname{argmax }}\; p(\boldsymbol{\sigma},\boldsymbol{\ell} |\textbf{X},g)$. The best model is $\underset{g}{\operatorname{argmax }}\; p(\boldsymbol{\omega}^{(g)}|\textbf{X})$ and it is found by applying the \textsc{bic} approximation among those $g_{\max}$ selected models. 
However, an exhaustive search strategy is not feasible for two correlated reasons: firstly, the number of couples $(\boldsymbol{\sigma},\boldsymbol{\ell})$ can be excessively huge, and,  secondly, the estimation of the \textsc{mle} for each of them is an unnecessary waste of time computing. A Metropolis-within-Gibbs sampler strategy overcomes these two drawbacks at the same time, as we now describe.\bigskip

For a fix value of $g$, the couple $(\boldsymbol{\sigma}^{(g)},\boldsymbol{\ell}^{(g)}) $ is estimated by the following Metropolis-within-Gibbs sampler \citep{Rob04} having $p(\boldsymbol{\sigma},\boldsymbol{\ell} |g,\textbf{X})$ as stationary distribution and whose the iteration $[s]$  is written as
\begin{align}
\boldsymbol{\theta}^{[s+1]} &\sim p(\boldsymbol{\theta} | \boldsymbol{\omega}^{[s]}, \textbf{X},\textbf{Z}^{[s]}) \label{s_param} \\
\textbf{Z}^{[s+1]} &\sim p(\textbf{Z} | \boldsymbol{\omega}^{[s]},\textbf{X},\boldsymbol{\theta}^{[s+1]}) \\
(\boldsymbol{\sigma}^{[s+1]},\boldsymbol{\ell}^{[s+1]}) &\sim p(\boldsymbol{\sigma},\boldsymbol{\ell}|\boldsymbol{\omega}^{[s]},\textbf{X},\textbf{Z}^{[s+1]}), \label{s_model}
\end{align} 
where $\boldsymbol{\omega}^{[s]}=(g,\boldsymbol{\sigma}^{[s]},\boldsymbol{\ell}^{[s]})$. As the observed data are independent, the full conditional distribution of $\textbf{Z}$ is classical and is written as
\begin{equation}
p(\textbf{Z}| \boldsymbol{\omega},\textbf{X},\boldsymbol{\theta})=\prod_{i=1}^n p(\textbf{z}_i|\boldsymbol{\omega},\textbf{x}_i,\boldsymbol{\theta})
\text{ with }
 p(\textbf{z}_i|\boldsymbol{\omega},\textbf{x}_i,\boldsymbol{\theta})=\prod_{k=1}^g(t_{ik}(\boldsymbol{\theta}))^{\text{z}_{ik}}. \label{cl}
\end{equation}
In this section, we firstly detail the full conditional distributions sampling the parameters (denoted by \emph{instrumental elements}) by using block parametrization given in Section~\ref{sec::reparam}, and we secondly detail the sampling of $(\boldsymbol{\sigma},\boldsymbol{\ell})$ (considered as the \emph{interest elements}).

\subsection{Sampling of the instrumental elements} \label{estimbayes}
We now detail the sampling according to $ p(\boldsymbol{\theta} | \boldsymbol{\omega}^{[s]}, \textbf{X},\textbf{Z}^{[s]}) $ defined in \eqref{s_param}.
\paragraph{Prior assumption} We assume the independence \emph{a priori} between the class proportions and the block distribution parameters, involving that the prior of the whole parameter is written as
\begin{equation}
p(\boldsymbol{\theta}|\boldsymbol{\omega})=p(\boldsymbol{\pi}|\boldsymbol{\omega})\prod_{k=1}^g \prod_{j=1}^{d}p(\boldsymbol{\alpha}_{kj}|\boldsymbol{\omega}).
\end{equation}
Note that this property of conditional independence is preserved for the distribution of $\boldsymbol{\theta}$ conditionally on $(\boldsymbol{\omega},\textbf{X},\textbf{Z})$, thus
 \begin{equation}
  p(\boldsymbol{\theta}|\boldsymbol{\omega},\textbf{X},\textbf{Z})=p(\boldsymbol{\pi}|\boldsymbol{\omega},\textbf{X},\textbf{Z})\prod_{k=1}^g \prod_{j=1}^{d}p(\boldsymbol{\alpha}_{kj}|\boldsymbol{\omega},\textbf{X},\textbf{Z}).
 \end{equation}

\paragraph{Prior and posterior distributions of $\boldsymbol{\pi}$}
The Jeffreys non informative prior distribution, for a multinomial, is a conjugate Dirichlet distribution \citep{Rob07}. So, the prior and the posterior distributions of $\boldsymbol{\pi}$ \citep{Bie10} are respectively defined by
\begin{equation}
\boldsymbol{\pi}|\boldsymbol{\omega} \sim \mathcal{D}_g\Big(\frac{1}{2},\ldots,\frac{1}{2}\Big) \text{ and }
\boldsymbol{\pi}|\boldsymbol{\omega},\textbf{X},\textbf{Z} \sim \mathcal{D}_g\Big(\frac{1}{2}+\text{n}_1,\ldots,\frac{1}{2}+\text{n}_g\Big), \label{prop}
\end{equation}
where  $\text{n}_k=\sum_{i=1}^n\text{z}_{ik}$ (not equal to $n_k^{[r]}$).

\paragraph{Prior distribution of $\boldsymbol{\alpha}_{kj}$}
We now use the parametrization of the block distribution $(\boldsymbol{\delta}_{kj}, \boldsymbol{a}_{kj})$ (defined in Section~\ref{sec::reparam}). We assume the independence between the prior of $ \boldsymbol{\delta}_{kj}$  and of $\boldsymbol{a}_{kj} $, so 
\begin{align}
p(\boldsymbol{\alpha}_{kj}| \boldsymbol{\omega})=p(\boldsymbol{\delta}_{kj} | \boldsymbol{\omega})p(\boldsymbol{a}_{kj} | \boldsymbol{\omega}).
\end{align}
We use a uniform distribution among all the mode locations and a conjugate  truncated Dirichlet distribution\footnote{
$p(\boldsymbol{a}_{kj} |\boldsymbol{\omega})\propto \prod_{h=1}^{\ell_{kj}+1} (a_{kjh})^{\gamma_{kjh}-1} \mathds{1}_{ \left\{a_{kjh}\geq \frac{a_{kj\ell_{kj}+1}}{\text{m}_j -\ell_{kj}}\right\}}$.} as prior of $\boldsymbol{a}_{kj}$, so
\begin{equation}
p(\boldsymbol{\delta}_{kj}|\boldsymbol{\omega}) = \binom{\text{m}_j}{\ell_{kj}} ^{-1}
\text{ and }
\boldsymbol{a}_{kj}|\boldsymbol{\omega} \sim D_{\ell_{kj}+1}^t \Big(\gamma_{kj1},\ldots,\gamma_{kj\ell_{kj}+1};\text{m}_j\Big), \label{prior}
\end{equation}
where the $\gamma_{kjh}$ are the parameters of the truncated Dirichlet distribution. In \ref{proofs}, we justify why we now fix $\gamma_{kjh}=1$. The proposed prior is also weakly informative since it is an uniform distribution.

\paragraph{Posterior distribution of  $\boldsymbol{\alpha}_{kj}$}
The posterior distribution of $\boldsymbol{\alpha}_{kj}$ is written as
\begin{equation} \label{param}
p(\boldsymbol{\alpha}_{kj} | \boldsymbol{\omega},\textbf{X},\textbf{Z})=p(\boldsymbol{\delta}_{kj} | \boldsymbol{\omega},\textbf{X},\textbf{Z})p(\boldsymbol{a}_{kj} | \boldsymbol{\omega},\boldsymbol{\delta}_{kj},\textbf{X},\textbf{Z}). 
\end{equation}
The distribution of $\boldsymbol{\delta}_{kj}| \boldsymbol{\omega},\textbf{X},\textbf{Z}$ is a multinomial one with too many values to be computable.
Let the set $\tilde{\boldsymbol{\delta}}_{kj}=\{\tilde{\delta}_{kjh};h=1,\ldots,\ell_{kj}\}$ containing the indices of the $\ell_{kj}$ largest values of $\text{n}_{kjh}=\sum_{i=1}^{n}\text{z}_{ik}\text{x}_i^{jh}$ ordered such as
\begin{equation}
\forall h\in \{1,\ldots,\ell_{kj}-1\}, \quad \text{n}_{kj\tilde{\delta}_{kjh}}\geq \text{n}_{kj\tilde{\delta}_{kjh+1}}.
\end{equation}
We assume that the difference between the mode probabilities and the non-mode probabilities are significant. So, we can approximate the full conditional distribution of $\boldsymbol{\delta}_{kj}$ by a Dirac in $\tilde{\boldsymbol{\delta}}_{kj}$. This approximation is strengthened by the fast convergence speed of the discrete parameters \citep{Cho12}. Concerning now $\boldsymbol{a}_{kj}$, as its prior is conjugated, its conditional distribution is explicitly defined as
\begin{equation}
\boldsymbol{a}_{kj} | \boldsymbol{\omega},\boldsymbol{\delta}_{kj},\textbf{X},\textbf{Z} \sim \mathcal{D}_{\ell_{kj}+1}^{t}\Big(1 + \text{n}_{kj(1)},\ldots,1 + \text{n}_{kj(\ell_{kj})},1 + \bar{\text{n}}_{kj}^{\ell_{kj}}; \text{m}_j \Big),
\end{equation}
where $\text{n}_{kj(h)}$ is the $h$th larger value of the set $\{\text{n}_{kjh};h=1,\ldots,\text{m}_j\}$ and $\bar{\text{n}}_{kj}^{\ell_{kj}}=\text{n}_k - \sum_{h=1}^{\ell_{kj}} \text{n}_{kj(h)}$.

\subsection{Sampling of a new model according to $p(\boldsymbol{\sigma},\boldsymbol{\ell}|\boldsymbol{\omega}^{[s]},\textbf{X},\textbf{Z}^{[s+1]})$} \label{ar}
The sampling of $\boldsymbol{\omega}^{[s+1]}=(g,\boldsymbol{\sigma}^{[s+1]},\boldsymbol{\ell}^{[s+1]})$ according to Equation~\eqref{s_model} is performed in two steps. Firstly, a new repartition of the variables into blocks and the mode number of the modified blocks, respectively denoted by $ \boldsymbol{\sigma}^{[s+1]}$ and $\boldsymbol{\ell}^{[s+1/2]}$, are sampled by one iteration of a Metropolis-Hastings algorithm. Secondly, the mode number of each block is sampled by one \textsc{mcmc} iteration. Thus, the sampling of $\boldsymbol{\omega}^{[s+1]}$ is decomposed into the two following steps
\begin{align}
(\boldsymbol{\sigma}^{[s+1]},\boldsymbol{\ell}^{[s+1/2]})  &\sim p(\boldsymbol{\sigma},\boldsymbol{\ell} |\boldsymbol{\omega}^{[s]},\textbf{X},\textbf{Z}^{[s+1]}) \label{mo1}\\
 \boldsymbol{\ell}^{[s+1]} &\sim p(\boldsymbol{\ell} |\boldsymbol{\omega}^{[s+1/2]},\textbf{X},\textbf{Z}^{[s+1]}), \label{mo2}
\end{align}
where $\boldsymbol{\omega}^{[s+1/2]}=(g,\boldsymbol{\sigma}^{[s+1]},\boldsymbol{\ell}^{[s+1/2]})$. Thus, this chain has $p(\boldsymbol{\sigma},\boldsymbol{\ell}|g,\textbf{X},\textbf{Z}^{[s+1]})$ as stationary distribution.

\subsubsection{Metropolis-Hastings algorithm to sample $\boldsymbol{\omega}^{[s+1/2]}$}
The sampling of $\boldsymbol{\omega}^{[s+1/2]}$ is performed by one iteration of the Metropolis-Hastings algorithm divided into two steps. Firstly, the proposal distribution $q(.;\boldsymbol{\omega}^{[s]})$ generates a candidate $\boldsymbol{\omega}^{\star}=(g,\boldsymbol{\sigma}^{\star},\boldsymbol{\ell}^{\star})$. Secondly $\boldsymbol{\omega}^{[s+1]}$ is sampled according to the acceptance probability $\mu^{[s]}$ defined by
\begin{equation} \label{proba}
\mu^{[s]}= 1 \wedge \frac{ 
p(\textbf{X},\textbf{Z}^{[s]}|\boldsymbol{\omega}^{\star}) }{
p(\textbf{X},\textbf{Z}^{[s]}|\boldsymbol{\omega}^{[s]})}
\frac{
q(\boldsymbol{\omega}^{[s]};\boldsymbol{\omega}^{\star} )
}{
 q(\boldsymbol{\omega}^{\star};\boldsymbol{\omega}^{[s]} )
}.
\end{equation}
The computation of $\mu^{[s]}$ involves to compute the integrated complete-data likelihood. In Section~\ref{comp}, we described how to solve this problem without using the biased \textsc{bic} approximation or using too much time computing \textsc{mcmc} methods. The sampling of $\boldsymbol{\omega}^{[s+1/2]}$ is written as
\begin{align}
\boldsymbol{\omega}^{\star} &\sim q(.;\boldsymbol{\omega}^{[s]}) \\
\boldsymbol{\omega}^{[s+1/2]}&=\left\{ \begin{array}{rl}
\boldsymbol{\omega}^{\star} & \text{ with a probability } \mu^{[s]}\\
\boldsymbol{\omega}^{[s]} & \text{ otherwise.}
\end{array}
\right. \nonumber
\end{align}
The proposal distribution $ q(.;\boldsymbol{\omega}^{[s]})$ samples $\boldsymbol{\omega}^{\star}$ in two steps. The first step changes the block affectation of one variable. In practice, $\boldsymbol{\sigma}^{\star}$ is uniformly sampled in $V(\boldsymbol{\sigma}^{[s]})=\{\boldsymbol{\sigma}:\; \exists ! b \text{ as } b\in \boldsymbol{\sigma}_j^{[s]} \text{ and } b \notin \boldsymbol{\sigma}_j\}$. The second step  uniformly samples the mode numbers among all its possible values for the modified blocks while $\ell^{\star}_{kj}=\ell^{[s]}_{kj}$ for non-modified blocks (i.e. $j$ as $\boldsymbol{\sigma}_j^{[s]}=\boldsymbol{\sigma}_j^{\star}$).

\subsubsection{MCMC algorithm to sample $\boldsymbol{\ell}^{[s+1]}$}
This step allows to increase or decrease the mode number of each block by one at each iteration. So, $\ell_{kj}^{[s+1]}$ is sampled according to  $p(\ell_{kj} | \boldsymbol{\omega}^{[s+1/2]},\textbf{X},\textbf{Z}^{[s]})$ defined by
\begin{align}
p(\ell_{kj} | \boldsymbol{\omega}^{[s+1/2]},\textbf{X},\textbf{Z}^{[s+1]}) \propto 
\left\{ \begin{array}{rl}
p(\textbf{X}^{j}|\textbf{Z}^{[s+1]},\ell_{kj}) & \text{ if } |\ell_{kj}-\ell_{kj}^{[s+1/2]}|<2\\& \text{ and } \ell_{kj} \notin \{0,\text{m}_j\}. \\
0 & \text{ otherwise.}
\end{array}
\right. \label{chgt_ell}
\end{align}
Thus, this algorithm needs the value of  $p(\textbf{X}^{j}|\textbf{Z},\ell_{kj})$  defined by
\begin{equation}
p(\textbf{X}^{j}|\textbf{Z},\ell_{kj}) = \int_{S(\ell_{kj},\text{m}_j)} \prod_{j=1}^{\text{m}_j} (\alpha_{kjh})^{\text{n}_{kjh}} d\boldsymbol{\alpha}_{kj}.
\end{equation}
That we have to detail now.

\subsubsection{The integrated complete-data likelihood} \label{comp}
The integrated complete-data likelihood is defined as
\begin{equation}
p(\textbf{X},\textbf{Z}|\boldsymbol{\omega})=p(\textbf{Z}|\boldsymbol{\omega})\prod_{k=1}^g \prod_{j=1}^d p(\textbf{X}^{j}|\textbf{Z},\ell_{kj}),
\end{equation}
where $\textbf{X}^{j}=(\textbf{X}_i^{j};i=1,\ldots,n)$. Note that the quantities $p(\textbf{X},\textbf{Z}|\boldsymbol{\omega})$ and $p(\textbf{X}^{j}|\textbf{Z},\ell_{kj})$ are respectively needed to compute the acceptance probability of the Metropolis-Hastings algorithm (see Equation~\eqref{proba}) and to sample the number of modes (see Equation~\eqref{chgt_ell}) and can be evaluated by \textsc{bic}-like approximations \citep{Sch78,Leb06}. For instance, the integrated complete-data likelihood is approximated by
\begin{equation}
\ln p(\textbf{X},\textbf{Z}|\boldsymbol{\omega})=\ln p(\textbf{X},\textbf{Z}| \boldsymbol{\theta}^{\star}, \boldsymbol{\omega}) - \frac{\nu}{2} \ln n + \mathcal{O}(1),
\end{equation}
$\boldsymbol{\theta}^{\star}$ begin the maximum complete-data likelihood estimate. However, this kind of approximation is only asymptotically true and over-estimates the mode numbers (see Section~\ref{sec::simul1}). As $\textbf{Z}|\boldsymbol{\omega}$ follows a uniform distribution among all the possible partitions, we propose to compute each $p(\textbf{X}^{j}|\textbf{Z},\ell_{kj})$ to obtain $p(\textbf{X},\textbf{Z}|\boldsymbol{\omega})$. This computation is not easy since $\boldsymbol{\alpha}_{kj}$ is defined on $S(\ell_{kj};\text{m}_j)$ and not on the whole simplex of size $\ell_{kj}$ (except when $\ell_{kj}=\text{m}_j-1$, in such case we can use the approach of \textsc{cim} \citep{Bie10}). An explicit formula is given in the following proposition whose the proof is given in \ref{proofs} by performing an exact computation of the integral over the continuous parameters and an approximation on the discrete ones.

\begin{prop}
The integrated complete-data likelihood is approximated, by neglecting the sum over the discrete parameters of the modes locations and by performing the exact computation on the continuous parameters, by
\begin{equation}
p(\textbf{X}^{j}|\textbf{Z},\ell_{kj}) \approx \left( \frac{1}{m_j-\ell_{kj}} \right)^{\bar{\text{n}}_{kj}^{\ell_{kj}}}  \prod_{h=1}^{\ell_{kj}} \frac{Bi\left(\frac{1}{\text{m}_j-h+1};\text{n}_{kj(h)}+1;\bar{\text{n}}_{kj}^{h}+1\right)}{\text{m}_j- h},
\end{equation}
where $Bi(x;a,b)=B(1;a,b)-B(x;a,b)$, $B(x;a,b)$ being the incomplete beta function defined by $B(x;a,b)=\int_0^x w^a(1-w)^b dw$.
\end{prop}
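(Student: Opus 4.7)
The plan is to use the $(\boldsymbol{\delta}_{kj},\boldsymbol{a}_{kj})$-parametrization of Section~\ref{sec::reparam} to split the marginalization into its discrete and continuous parts, approximate the sum over mode locations, and compute the remaining continuous integral in closed form by sequential integration.

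First I would write
\begin{equation*}
p(\textbf{X}^{j}|\textbf{Z},\ell_{kj})=\sum_{\boldsymbol{\delta}_{kj}} p(\boldsymbol{\delta}_{kj}|\boldsymbol{\omega})\int p(\textbf{X}^{j}|\textbf{Z},\boldsymbol{\delta}_{kj},\boldsymbol{a}_{kj})\,p(\boldsymbol{a}_{kj}|\boldsymbol{\omega})\,d\boldsymbol{a}_{kj},
\end{equation*}
where, by \eqref{dirac}, the conditional likelihood factorizes as $\prod_{h=1}^{\ell_{kj}} a_{kjh}^{\text{n}_{kj\delta_{kjh}}}\cdot\bigl(a_{kj,\ell_{kj}+1}/(\text{m}_j-\ell_{kj})\bigr)^{\bar{\text{n}}_{kj}^{\ell_{kj}}}$. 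As argued in Section~\ref{estimbayes}, the posterior of $\boldsymbol{\delta}_{kj}$ concentrates around $\tilde{\boldsymbol{\delta}}_{kj}$, the set of positions carrying the $\ell_{kj}$ largest counts; I would accordingly approximate the sum over $\boldsymbol{\delta}_{kj}$ by its dominant term at $\tilde{\boldsymbol{\delta}}_{kj}$. Pulling out $(\text{m}_j-\ell_{kj})^{-\bar{\text{n}}_{kj}^{\ell_{kj}}}$ then produces the first factor of the announced formula, and under the flat prior ($\gamma_{kjh}=1$) of Section~\ref{sec::model} reduces the problem to evaluating
\begin{equation*}
\int_{T(\ell_{kj},\text{m}_j)} \prod_{h=1}^{\ell_{kj}} a_{kjh}^{\text{n}_{kj(h)}} \cdot a_{kj,\ell_{kj}+1}^{\bar{\text{n}}_{kj}^{\ell_{kj}}} \, d\boldsymbol{a}_{kj}
\end{equation*}
over the truncated simplex.

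For this continuous integral I would integrate the modes out sequentially, one at a time. The key observation is that after peeling off the first $h-1$ modes the residual configuration is again a truncated simplex of the same type but with parameters $(\text{m}_j-h+1,\ell_{kj}-h+1)$. Rescaling the $h$-th mode probability by the current remaining mass and changing variables accordingly, the original mode constraint $a_{kjh}\ge a_{kj,\ell_{kj}+1}/(\text{m}_j-\ell_{kj})$ decouples into the stage-wise condition that the rescaled variable exceed $1/(\text{m}_j-h+1)$---that is, the uniform level spread over the $\text{m}_j-h+1$ effective positions still to be allocated. Each one-dimensional integral then becomes precisely the incomplete Beta
\begin{equation*}
Bi\!\left(\frac{1}{\text{m}_j-h+1};\,\text{n}_{kj(h)}+1,\,\bar{\text{n}}_{kj}^h+1\right),
\end{equation*}
while the Jacobian of the transformation at step $h$ supplies the factor $1/(\text{m}_j-h)$, and reassembling across $h=1,\ldots,\ell_{kj}$ yields the claimed product.

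The main obstacle is the decoupling of the mode constraint in the last step: the original truncation couples every $a_{kjh}$ with the common non-mode mass $a_{kj,\ell_{kj}+1}$, and the non-routine part of the proof lies in designing the sequence of changes of variables under which the constraints become independent stage-by-stage, thereby turning the multidimensional integral into the product of independent incomplete-Beta integrals that appears in the statement.
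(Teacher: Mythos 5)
Your proof skeleton matches the paper's: approximate the sum over the mode locations by its dominant term and integrate the continuous parameters exactly through the sequential (stick-breaking) reparametrization $\varepsilon_{kjh}=a_{kj\delta_{kjh}}/\prod_{h'<h}(1-\varepsilon_{kjh'})$, each coordinate producing an incomplete Beta factor; the paper's only structural difference is that it performs the discrete approximation recursively, adding one mode at a time. The genuine gap is in your bookkeeping of the $(\text{m}_j,\ell_{kj})$-dependent constants, and these are exactly what the result is used for, since $p(\textbf{X}^{j}|\textbf{Z},\ell_{kj})$ only enters through ratios across different $\ell_{kj}$ in \eqref{chgt_ell} and across different models in \eqref{proba}, so they cannot be discarded as an irrelevant proportionality factor.

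Concretely: (i) the factors $1/(\text{m}_j-h)$ cannot be ``supplied by the Jacobian''. The Jacobian of $a_{kj(h)}=\varepsilon_{kjh}\prod_{h'<h}(1-\varepsilon_{kjh'})$ equals $\prod_{h'=1}^{\ell_{kj}-1}(1-\varepsilon_{kjh'})^{\ell_{kj}-h'}$, an $\boldsymbol{\varepsilon}$-dependent quantity; if you really computed the bare Lebesgue integral over the truncated simplex, carrying it along would shift the exponents to $\bar{\text{n}}_{kj}^{h}+\ell_{kj}-h$, altering the Beta arguments rather than producing constant denominators. In the paper these constants come from the discrete side: each added mode carries location-prior mass $1/(\text{m}_j-h+1)$ (the per-step max-term approximation in the recursion), which combines with the normalizing constant $(\text{m}_j-h+1)/(\text{m}_j-h)$ of the truncated uniform prior of $\varepsilon_{kjh}$ on $[1/(\text{m}_j-h+1),1]$ given in Lemma~2 of the appendix. (ii) When you ``reduce the problem'' to the bare integral you silently drop both $p(\tilde{\boldsymbol{\delta}}_{kj}|\boldsymbol{\omega})=\binom{\text{m}_j}{\ell_{kj}}^{-1}$ and the prior normalization of $\boldsymbol{a}_{kj}$; restoring them, your one-shot dominant-term approximation yields $\ell_{kj}!$ times the stated expression, because $\binom{\text{m}_j}{\ell_{kj}}^{-1}$ differs from the recursively accumulated $\prod_{h=1}^{\ell_{kj}}(\text{m}_j-h+1)^{-1}$ by exactly $\ell_{kj}!$; so even done carefully your route does not land on the formula being proved. (iii) The decoupling you identify as the crux is not exact: in $\boldsymbol{\varepsilon}$-coordinates the truncation $a_{kjh}\ge a_{kj\ell_{kj}+1}/(\text{m}_j-\ell_{kj})$ reads $\varepsilon_{kjh}\ge\prod_{h'=h}^{\ell_{kj}}(1-\varepsilon_{kjh'})/(\text{m}_j-\ell_{kj})$, which decouples only for $h=\ell_{kj}$; the paper obtains the product of independent truncated Betas on $[1/(\text{m}_j-h+1),1]$ by imposing that structure at the level of the prior, not by an exact change of variables, so the ``non-routine part'' you defer would not go through as announced without adopting the same convention or adding a further approximation argument.
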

From the previous expression, its is straightforward to obtain $p(\textbf{X},\textbf{Z}|\boldsymbol{\omega})$.

\section{Simulations \label{sec::simul}}
\subsection{Integrated complete-data likelihood: comparison of both approaches}\label{sec::simul1}
\paragraph{Aim} 
During this experiment, we highlight the biases of the \textsc{bic} criterion for the selection of the number of modes and the gain provided by the proposed computation of the integrated complete-data likelihood.

\paragraph{Data generation}
As we want to compare both approaches for the selection of the number of modes, we simulate samples composed by $n$ i.i.d individuals arisen from a multinomial distribution per modes $\mathcal{M}_s(r,r,r,\frac{1-3r}{s-3},\ldots,\frac{1-3r}{s-3})$ with $s$ modalities  and three modes having a probability $r$. For different sizes of sample, $10^5$ samples are generated with different values of $(r,s)$.

\paragraph{Results}
 Figure~\ref{res::simul} gives a comparison between the proposed approach and the \textsc{bic}-like approximation for the selection of the number of modes. The proposed criterion obtains best results than the \textsc{bic} criterion in the four studied situations for the large size of sample. Furthermore, it allows to never overestimates the mode number. Finally, its variability is smaller than the \textsc{bic} criterion one. We enter now into more specific comments.
\begin{figure}[h!]
   \begin{minipage}{0.49\textwidth}
      \centering \includegraphics[scale=0.38]{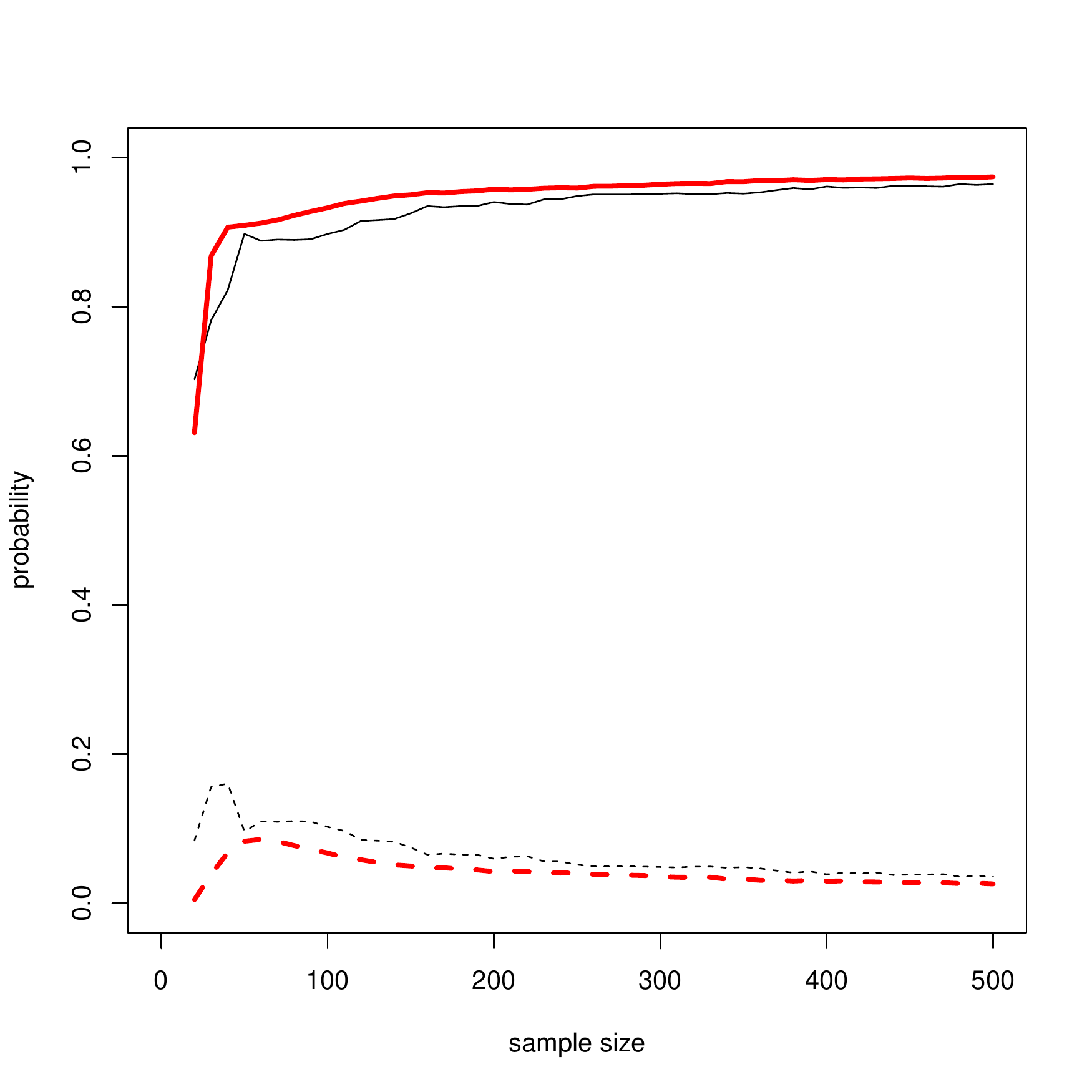} \\
	\centering (a)      \\
	 \centering \includegraphics[scale=0.38]{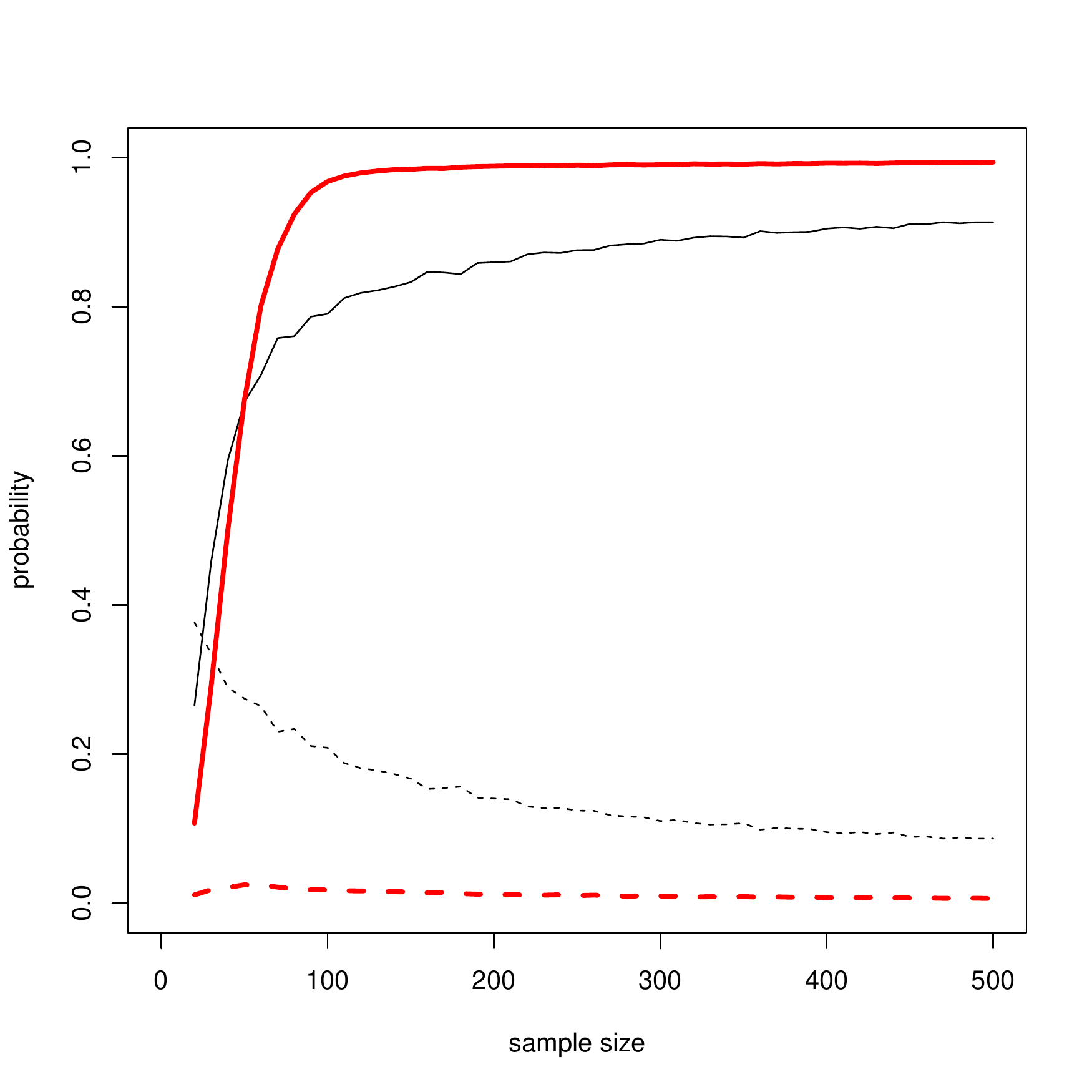} \\
	\centering (c) \\
   \end{minipage}\hfill
   \begin{minipage}{0.49\textwidth}   
      \centering \includegraphics[scale=0.38]{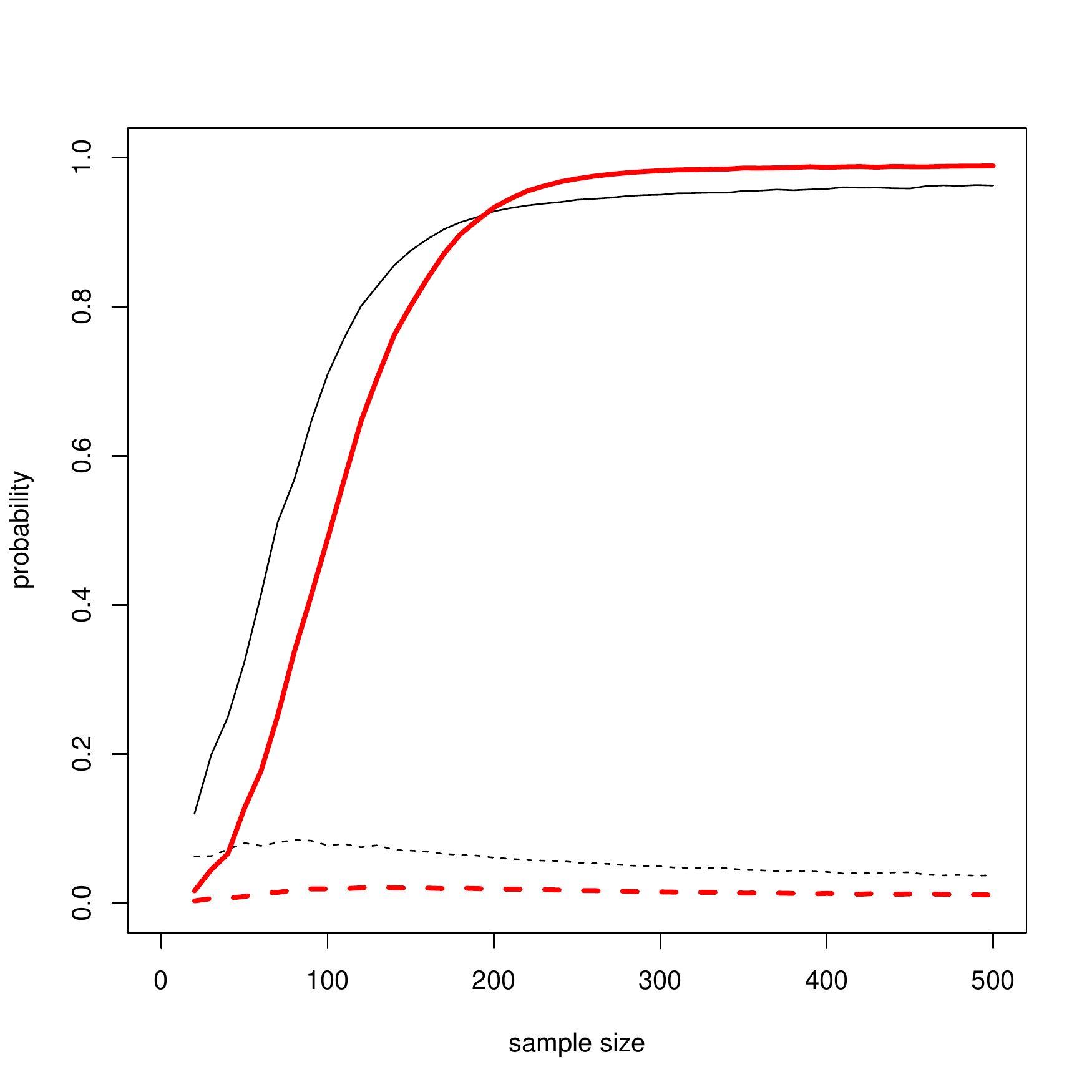} \\
	\centering (b)   \\  
	 \centering \includegraphics[scale=0.38]{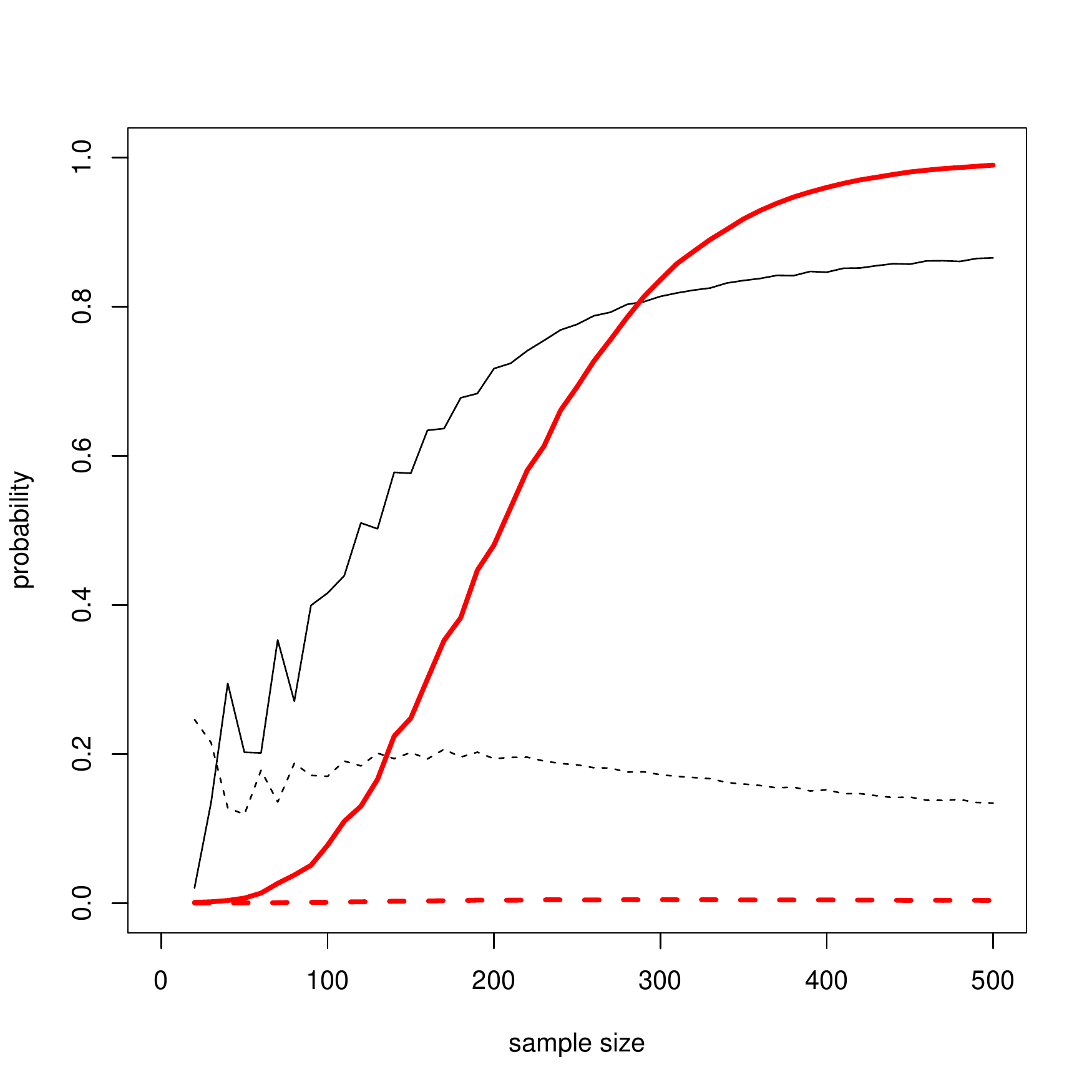} \\
	\centering (d) \\
\end{minipage}
\caption{Probability that the \textsc{bic} criterion (represented in fine black lines) and the proposed approach (represented in bold red lines) select the true number of modes (represented in plain line) and  over-estimate (represented in dotted line)  (a) r=0.3, s=9; (b) r=0.2, s=9; (c) r=0.2, s=18; (d) r=0.1, s=27.}  \label{res::simul}
\end{figure}

In case~(a), modes have a large probability mass and they are easily detected since there are few modalities. Thus, both criteria have the same behavior since they find the true number of modes with a probability close to one even for small samples. 

When the mode probabilities decrease (case~(b)), it is more difficult to identify them. In such case, the \textsc{bic} criterion allows to better find the true number of modes with a moderate overestimation risk, for the small samples (size lower than 150), than the proposed approach which can underestimates the number of modes. When the sample size is larger than 200, the proposed approach obtains better results since it finds a true number of modes almost always while the \textsc{bic} criterion keeps an overestimation risk.

If the number of modalities increases (case~(c)), then the problem becomes harder and the proposed approach also shows its interest since the \textsc{bic} criterion is strongly biased in such case. The \textsc{bic} criterion keeps this bias even for a large data set while the proposed approach almost always finds the true number of modes when the sample size is larger than 100.

Finally, note that in the more complex situations like in case~(d) (few probability mass for the modes and large number of modalities), the proposed approach underestimates the number of modes when the sample size is small then converges to the true mode values when the sample size increases. Note that, in such case, the bias of the \textsc{bic} criterion keeps significant even for a large data set.

Based on this experiment, the proposed criterion seems most relevant since its asymptotic behavior is better than the \textsc{bic} criterion, it never overestimates the mode number and its variability the smaller than the \textsc{bic} criterion.
\subsection{Simulation with well specified model}
\paragraph{Aim} 
During this experiment, we highlight the good behavior of the algorithms (Metropolis-within-Gibbs sampler and \textsc{em} algorithm) for performing the model selection and the estimation of the \textsc{mle}. Thus, data are generated according to \textsc{cmm}, then the model and the \textsc{mle} are estimated. The quality of the estimation is determined by the Kullback-Leibler divergence. We show that this divergence converges to zero when the sample size increases. So, we conclude to the good behavior of both algorithms.

\paragraph{Data generation} A data set of six variables with three modalities is generated according to a bi-component \textsc{cmm} with the following parameters: $\boldsymbol{\sigma}=( \{1 ,2 \} , \{ 3,4\} , \{ 5,6 \} )$, $\ell_{kj}=2$, $\boldsymbol{\pi}=(0.5,0.5)$, $\boldsymbol{\alpha}_{kj}=(0.4,0.4,0.2/7)$,  where the modes are located at different modality crossings for both classes.

\paragraph{Results} For different values of $n=(50,100,200,400,800)$, 100 samples are generated. The Kullback-Leibler divergence is computed between the true and the estimated parameters. Table~\ref{simul1} presents the mean of this divergence.

\begin{table}[h'!]
\begin{center}
\begin{tabular}{rccccc}
\hline $n$ & 50 & 100 & 200 & 400 & 800 \\
\hline mean& 0.656 & 0.117 & 0.061 & 0.028 & 0.015 \\
sd & 0.636 & 0.052 & 0.018 & 0.007 & 0.003 \\
\hline
\end{tabular}
\caption{Mean and standard deviation of the Kullback-Leibler divergence computed between the true parameters of the specified model and the maximum likelihood estimates associated to the model selected by the Metropolis-within-Gibbs algorithm for different sample size.\label{simul1}}
\end{center}
\end{table}

As the Kullback-Leibler divergence converges to zero, when the sample size increases, we claim that the estimated distribution converges to the true one. Thus, we conclude to the good behavior of the estimation algorithm.

\subsection{Simulation with misspecified model}
\paragraph{Aim} 
During this experiment, we underline that the flexibility of \textsc{cmm} allows it to keep good results even if the model is misspecified. Thus, we simulate samples according to a bi-component mixture model where the intra-class dependencies are different for both components. A tuning parameter allows us to modify the strength of the intra-class dependencies and the class overlapping. The results of \textsc{cmm} are compared to those of \textsc{cim}.

\paragraph{Data generation} A data set of size 100 is sampled from the following bi-component mixture model of dimension six

\begin{small}
\begin{equation}
p(\boldsymbol{x};\boldsymbol{\theta})=0.5 \prod_{h=1}^3 p(\boldsymbol{x}^{2h-1},\boldsymbol{x}^{2h};\boldsymbol{\theta}) + 0.5\; p(\boldsymbol{x}^1;\boldsymbol{\theta})p(\boldsymbol{x}^6;\boldsymbol{\theta})\prod_{h=1}^2 p(\boldsymbol{x}^{2h},\boldsymbol{x}^{2h+1};\boldsymbol{\theta}),
\end{equation}
\end{small}
with $p(\boldsymbol{x}^j,\boldsymbol{x}^{j+1};\boldsymbol{\theta})=p(\boldsymbol{x}^j;\boldsymbol{\theta})\big( \lambda \mathds{1}_{\{\boldsymbol{x}^j=\boldsymbol{x}^{j+1}\}} + (1-\lambda) p(\boldsymbol{x}^{j+1};\boldsymbol{\theta}) \big)$  and with $\boldsymbol{x}^j\sim\mathcal{M}_3(1/3,1/3,1/3).$
Thus, when $\lambda=0$, the sample is generated by a uniform distribution and classes are confused. The larger is the tuning parameter $\lambda$, the larger are the intra-class dependencies and the class separation. Note that \textsc{cmm} is not the good model since the conditionally correlated variables are not the same in both classes.

\paragraph{Results} For different values of $\lambda=(0.2,0.4,0.6,0.8)$, 100 samples are generated. The Kullback-Leibler divergence associated to the model with the best number of classes (selected by the \textsc{bic} criterion among $g=1,2,3,4$) is computed. Table~\ref{simul2} presents the results obtained by \textsc{cmm} and \textsc{cim}.\\
\begin{table}[h'!]
\begin{center}
\begin{tabular}{rcccc}
\hline $\lambda$ & 0.2 &0.4 & 0.6 & 0.8 \\
\hline \textsc{cmm} & 0.09 (1.00)  & 0.25 (1.16)  & 0.53 (2.08) & 0.87 (2.10)  \\
	\textsc{cim} & 0.11 (1.00) & 0.27 (1.00) & 1.67 (1.12)  & 5.79 (1.40)  \\
\hline
\end{tabular}
\caption{Kullback-Leibler divergence and mean of the class number obtained by \textsc{cmm} and \textsc{cim}.\label{simul2}}
\end{center}
\end{table}

The larger is $\lambda$, the larger is the Kullback-Leibler divergence for both models. However, the flexibility of \textsc{cmm} allows to keep an acceptable value of the Kullback-Leibler divergence while this divergence grows dramatically faster with \textsc{cim}. Furthermore, when the classes are well separated (large value of $\lambda$), \textsc{cmm} finds more often the true class number than \textsc{cim}.

\section{Applications} \label{sec::appli}
For both applications, the estimation of \textsc{cmm} was performed by the R package \texttt{CoModes}. Both data set are available in \texttt{CoModes} developed by the authors. \ref{tutorial} displays the R code of the second application and can be used as a tutorial of \texttt{CoModes}.

\subsection{Seabirds clustering}
\paragraph{Data} We study a biological data set describing 153 puffins (seabirds) by five plumage and external morphological characteristics presented in Table~\ref{data_birds} \citep{Bre07}. These seabirds are divided into three subspecies
\emph{dichrous} (84 birds), \emph{lherminieri} (34 birds) and \emph{subalaris} (35 birds).

\begin{table}[`!ht']
\begin{center}
\begin{tabular}{rccccccc}
\hline variables&$m_j$ & \multicolumn{4}{c}{modalities}\\
\hline collar & 5 &none & ... & ... $\quad$ ... & continuous \\
eyebrows & 4 &none & ... &  ... &   very pronounced & \\
sub-caudal& 4 & white & black & black and white &  BLACK and white & \\
border & 3& none & ... &  many & & \\
gender &2& male & female & & & \\
\hline
\end{tabular}
\end{center}
\caption{Presentation of the five plumage and external morphological variables describing the puffins. \label{data_birds}}
\end{table}

\paragraph{Experimental settings} The subspecies memberships of the individuals are blinded. For $g=1,\ldots,6$, the \textsc{mle} of \textsc{cim} is obtained by 25 initializations of an \textsc{em} algorithm while 25 chains of 3000 iterations are performed for the model selection of \textsc{cmm} followed by 25 initializations of \textsc{em} algorithm to find the \textsc{mle}.

\paragraph{Results} Table~\ref{bic1} presents the values of the \textsc{bic} criterion for both models and different class numbers. Even if both models select two components, the values of the \textsc{bic} criterion are better for \textsc{cmm} than for \textsc{cim} for all the number of classes. Thus, \textsc{cmm} better fits the data than \textsc{cim}. 
\begin{table}[h!]
\begin{center}
\begin{tabular}{rcccccc}
\hline $g$ & 1 & 2 & 3 & 4 & 5 & 6 \\
\hline \textsc{cmm} & -711 & \textbf{-691} & -701 & -709 & -721 & -727	 \\
\textsc{cim} & -711 & \textbf{-706} & -722 & -745 & -775 & -805 \\
\hline
\end{tabular}
\end{center}
\caption{Values of the \textsc{bic} criterion for different class numbers and for \textsc{cmm} and \textsc{cim}. Boldface indicates the best values of this criterion. \label{bic1}}
\end{table}

According to Table~\ref{confu1} displaying the confusion matrix between the estimated partitions and the subspecies, we claim that the Subalaris are more different than the two other subspecies. Indeed, both models affect all the Subalaris in class 2. If the estimated partitions by both models are similar, we remark that \textsc{cmm} affects less other subspecies in this class than \textsc{cim}.  
\begin{table}[h!]
\begin{center}
\begin{tabular}{rcccc}
\hline & \multicolumn{2}{c}{\textsc{cmm}}& \multicolumn{2}{c}{\textsc{cim}} \\
	& class 1 & class 2 & class 1 & class 2 \\
\hline Dichrous & 52 & 32 & 48 & 36\\
Lherminieri & 23 & 11 & 22 & 12\\
Subalaris & 0 & 35  & 0 & 35\\
\hline
\end{tabular}
\caption{Confusion tables between the subspecies and estimated partition into two classes.\label{confu1}}
\end{center}
\end{table}

Figure~\ref{acmbirds}(a) displays the seabirds on the first correspondence analysis plan and indicates the subspecies. We note that all the Subalaris are in the same location (bottom left) for the first principal correspondence map. We display the partition corresponding to the best model (\textsc{cmm} with two components) in Figure~\ref{acmbirds}(b). Note that, for both model, the first principal correspondence axe allows to define a classification rule.

\begin{figure}[h!]
\begin{center}
\begin{minipage}{0.49\textwidth}
\centering \includegraphics[scale=0.37]{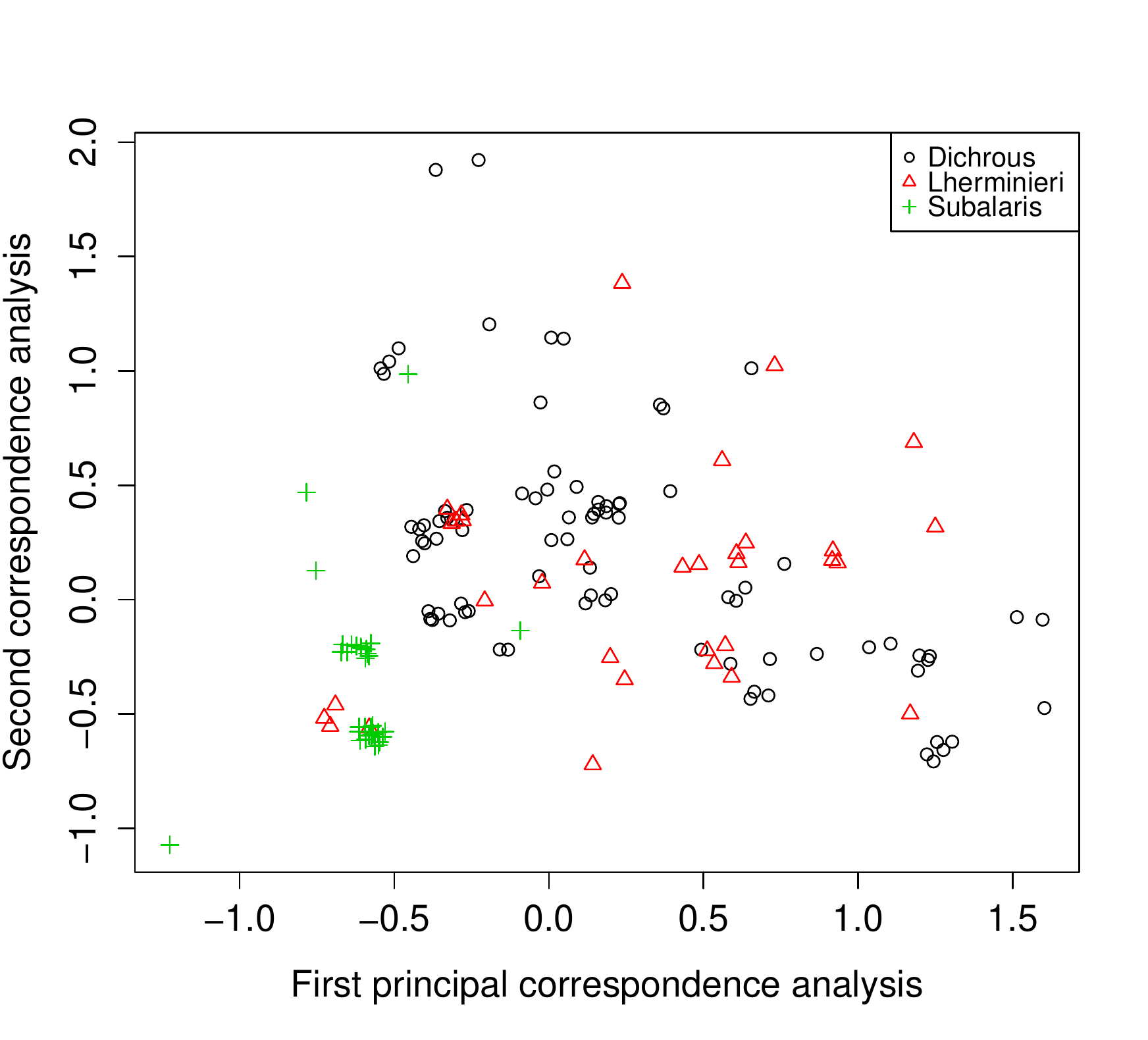} \\
\centering (a)\\
\end{minipage}
\begin{minipage}{0.49\textwidth}
\centering \includegraphics[scale=0.37]{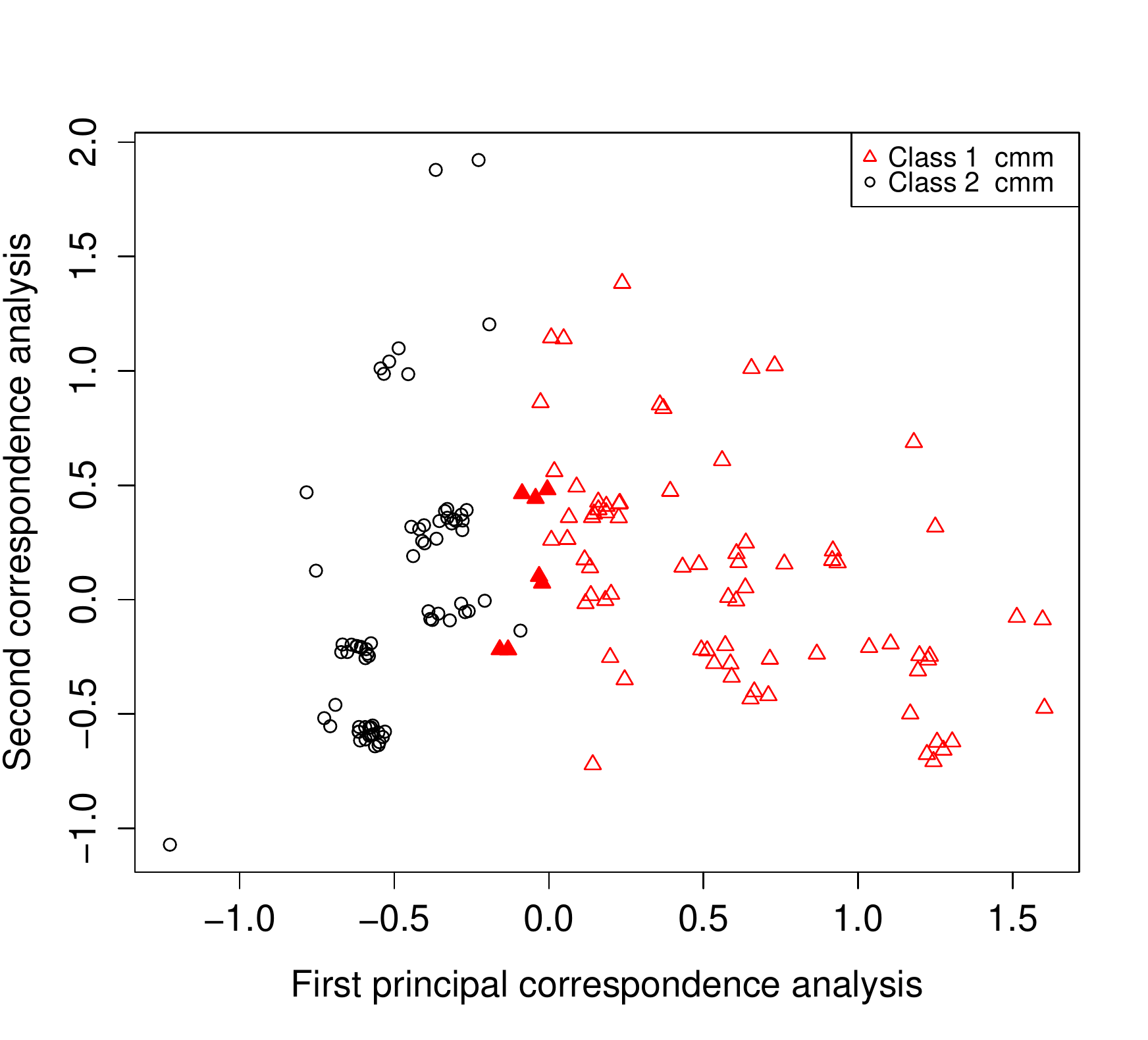} \\
\centering (b)\\
\end{minipage}
\end{center}
\caption{Seabirds on the first principal correspondence analysis map (a) with the subspecies and (b) with the best \textsc{cmm} estimated partition. The bold triangles indicate the individuals affected in class 1 for \textsc{cmm} and in class 2 for \textsc{cim}. An i.i.d. uniform noise on $[0,0.1]$ has be added on
both axes for each individual in order to improve visualization. \label{acmbirds}}
\end{figure}

We now describe the best bi-component model of \textsc{cmm}. Even if the estimated model assumes the conditional independence between the variables, this model is of interest because of its sparsity. Indeed, it is more parsimonious than \textsc{cim} since a small number of modes is estimated as shown by the summary proposed by $\kappa_{kj}$ and $\rho_{kj}$ defined in \eqref{summ} and presented in Table~\ref{summ1}. Thus, the first variables are characterized by few modalities with a high probability. As the variables are conditionally independent, the $\kappa_{kj}$  indicates the number of modalities having a probability upper than the uniform distribution. For example the multinomial distribution of the variable sub-caudal has two modes for both classes (so $\kappa_{kj}=2/3$).
\begin{table}[h!]
\begin{center}
\begin{tabular}{rccccc}
\hline & collar & eyebrows & sub-caudal & border & gender \\
\hline class 1 & 0.75 (0.93) & 0.67 (0.91)& 0.67 (0.88)& 1.00 (1.00)& 1.00 (0.55)\\
class 2 & 0.75 (0.98)& 0.67 (0.77)& 0.67 (0.99)& 0.50 (0.97)& 1.00 (0.57)\\
\hline
\end{tabular}
\end{center}
\caption{Summary of the CMM with three classes: $\kappa_{kj}$ is displayed in plain and $\rho_{kj}$ is displayed in parenthesis.  \label{summ1}}
\end{table}

The maximum likelihood estimates of the component parameters  are displayed in Figure~\ref{birds_param}. Each sub-figure corresponds to a block of variable, thus we note again that the estimated model assumes the conditional independence. For each block of variables, the modality crossings where one mode is estimated for  at least one component are focused. For these modality crossings, we display their cumulated probability masses for each component (the component are identifiable by different colors). These modality crossings are presented by decreasing order of cumulated probability mass.
Note that the mode locations are discriminative since the modality black (resp. white) has a probability of 0.64 (resp. 0.24) for the class 1 while the modality white (resp. BLACK and white) has a probability of 0.94 (resp. 0.05). 

\begin{figure}
\begin{center}
\includegraphics[scale=0.5]{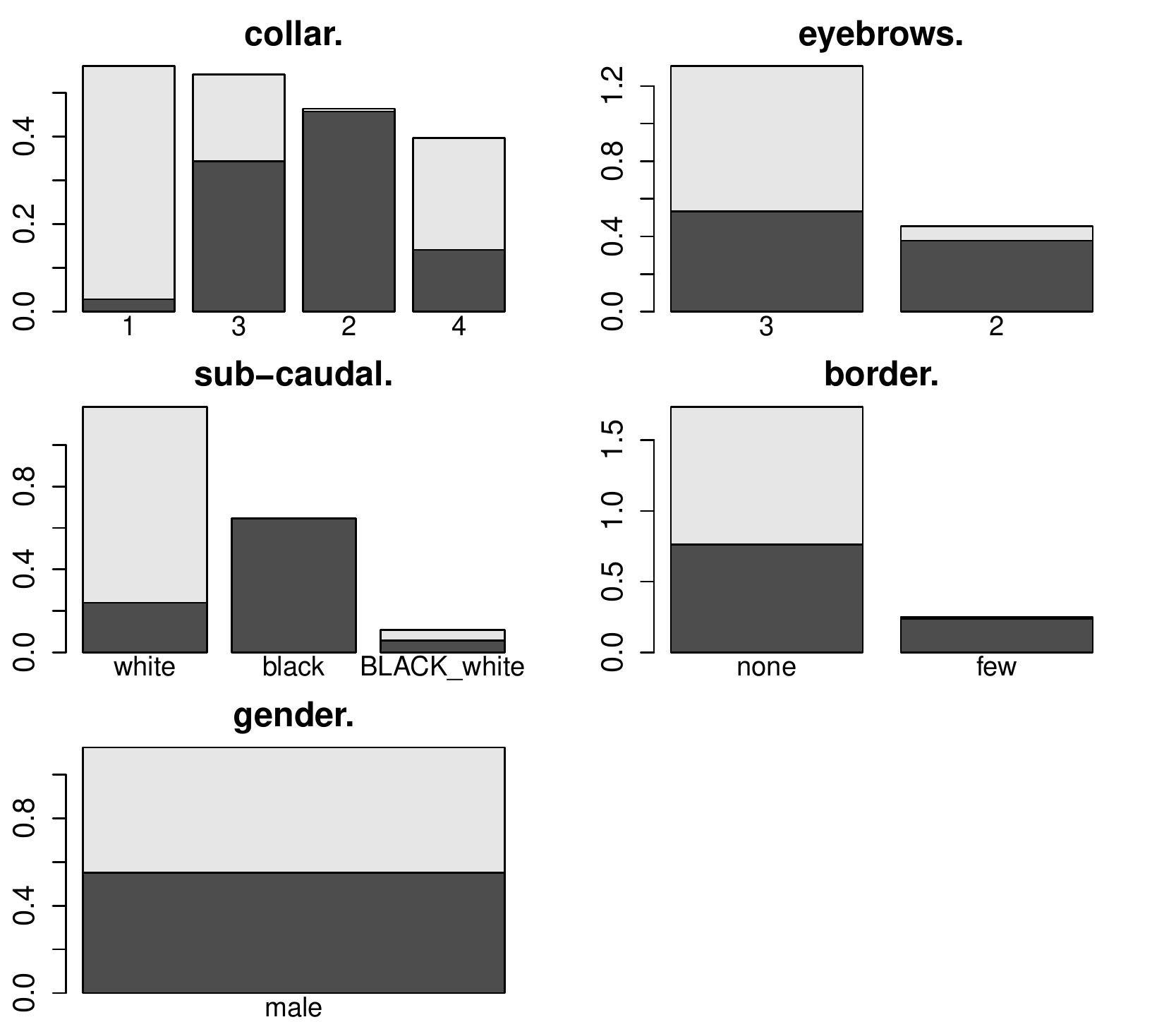} 
\caption{Class parameters of the bi-components \textsc{cmm} estimated on the Seabirds data. The black color (respectively the gray color) corresponds to the probability mass of the modes for the class 1 (respectively to the class 2).\label{birds_param}}
\end{center}
\end{figure}

Finally, the conditional independence assumption seems realistic since the conditional Cramer's V measures, presented in Table~\ref{vcramebirds}, are small.
\begin{table}[h!]
\begin{center}
\begin{minipage}{0.49\textwidth}
\begin{tabular}{ccccc}
1 & 0.14 & 0.15 & 0.23 & 0.21 \\ 
 & 1 & 0.36 & 0.20 & 0.13 \\ 
 &  & 1 & 0.13 & 0.19 \\ 
 &  &  & 1 & 0.01 \\ 
 &  &  &  & 1 \\ 
\end{tabular} \\
\centering (a) Class 1\\
\end{minipage}
\begin{minipage}{0.49\textwidth}
\begin{tabular}{ccccc}
1 & 0.14 & 0.09 & 0.11 & 0.28 \\ 
 & 1 & 0.24 & 0.21 & 0.26 \\ 
 &  & 1 & 0.02 & 0.07 \\ 
 &  &  & 1 & 0.17 \\ 
 &  &  &  & 1 \\ 
\end{tabular} \\
\centering (b) Class 2\\
\end{minipage}
\end{center}
\caption{Matrix of the Cramer's V measures computed according to the estimated classes.\label{vcramebirds}}
\end{table}
We also perform a bootstrap test of the global nullity of the Cramer's V by generating 1000 samples. We obtain a p-value of 0.91, so the conditional independence assumption is validated.

\subsection{Acute inflammations clustering}
A tutorial of the R package \texttt{CoModes} performing the clustering of the Acute inflammations data set is presented in \ref{tutorial}.

\paragraph{Data} We want to cluster 120 patients \citep{Cze03} described by five binary variables (occurrence of nausea (Nau), lumbar pain (Lum), urine pushing (Pus), micturition pains (Mic) and burning of urethra (Bur)) and by one three modalities variables (temperature of the patient (Tem): $T<37C$, $37°C\leq T<38°C$ and $38°C\leq T$). We know that some patients have one of the following diseases of the urinary system: inflammation of urinary bladder and Nephritis of renal pelvis origin.

\paragraph{Experimental conditions} We use the same experimental conditions as the Seabirds clustering.

\paragraph{Results} Table~\ref{bic2} presents the values of the \textsc{bic} criterion for both models and different class numbers. For each class number, the \textsc{bic} criterion value of \textsc{cmm} is better than for \textsc{cim}. Futhermore, \textsc{cmm} selects three classes while \textsc{cim} selects four classes. This phenomenon can be due to the violated conditional independence assumption of \textsc{cim}.
\begin{table}[h!]
\begin{center}
\begin{tabular}{rcccccc}
\hline $g$ & 1 & 2 & 3 & 4 & 5 & 6 \\
\hline \textsc{cmm} & -510 & -351 & \textbf{-338} & -345 & -399 & -401\\
\textsc{cim} & -527 & -478 & -439 & \textbf{-407} & -412 & -418 \\
\hline
\end{tabular}
\end{center}
\caption{Values of the \textsc{bic} criterion for different classes number and for \textsc{cmm} and \textsc{cim}. Boldface indicates the best values of this criterion. \label{bic2}}
\end{table}

Note that the estimated distributions of \textsc{cim} and \textsc{cmm} are different. The obtained partition are also different. Table~\ref{acute_conf} displays the confusion matrices between the best model of \textsc{cmm} and the models of \textsc{cim} with three and four classes. Thus, if 29 individuals constitute a group which is well separated of the other individuals (class 3) for the three models, the other individuals have a class membership determined by the selected model.
\begin{table}[h!]
\begin{center}
\begin{minipage}{0.49\textwidth}
\begin{center}
\begin{tabular}{c|ccc}
 &\multicolumn{3}{c}{\textsc{cmm}}\\ 
 & c1 & c2 & c3 \\ 
\hline \textsc{cim} c1 & 40 & 0 & 0 \\ 
\textsc{cim} c2 & 10 & 41 & 0 \\ 
\textsc{cim} c3 & 0 & 0 & 29 \\ 
\end{tabular} 
\end{center}
\end{minipage}
\begin{minipage}{0.49\textwidth}
\begin{center}
\begin{tabular}{c|ccc}
 &\multicolumn{3}{c}{\textsc{cmm}}\\ 
 & c1 & c2 & c3 \\ 
\hline \textsc{cim} c1 & 40 & 0 & 0 \\ 
\textsc{cim} c2 & 10 & 20 & 0 \\ 
\textsc{cim} c3 & 0 & 21 & 0 \\ 
\textsc{cim} c4 & 0 & 0 & 29 \\ 
\end{tabular} 
\end{center}
\end{minipage}
\end{center}
\caption{Confusion matrices between the best model of \textsc{cmm} and the models of \textsc{cim} with three and four classes. \label{acute_conf}}
\end{table}

Figure~\ref{diagno} displays the individuals on the 1-5 principal correspondence analysis map where the estimated classes are well separated. 
\begin{figure}[h!]
\begin{center}
\centering \includegraphics[scale=0.3]{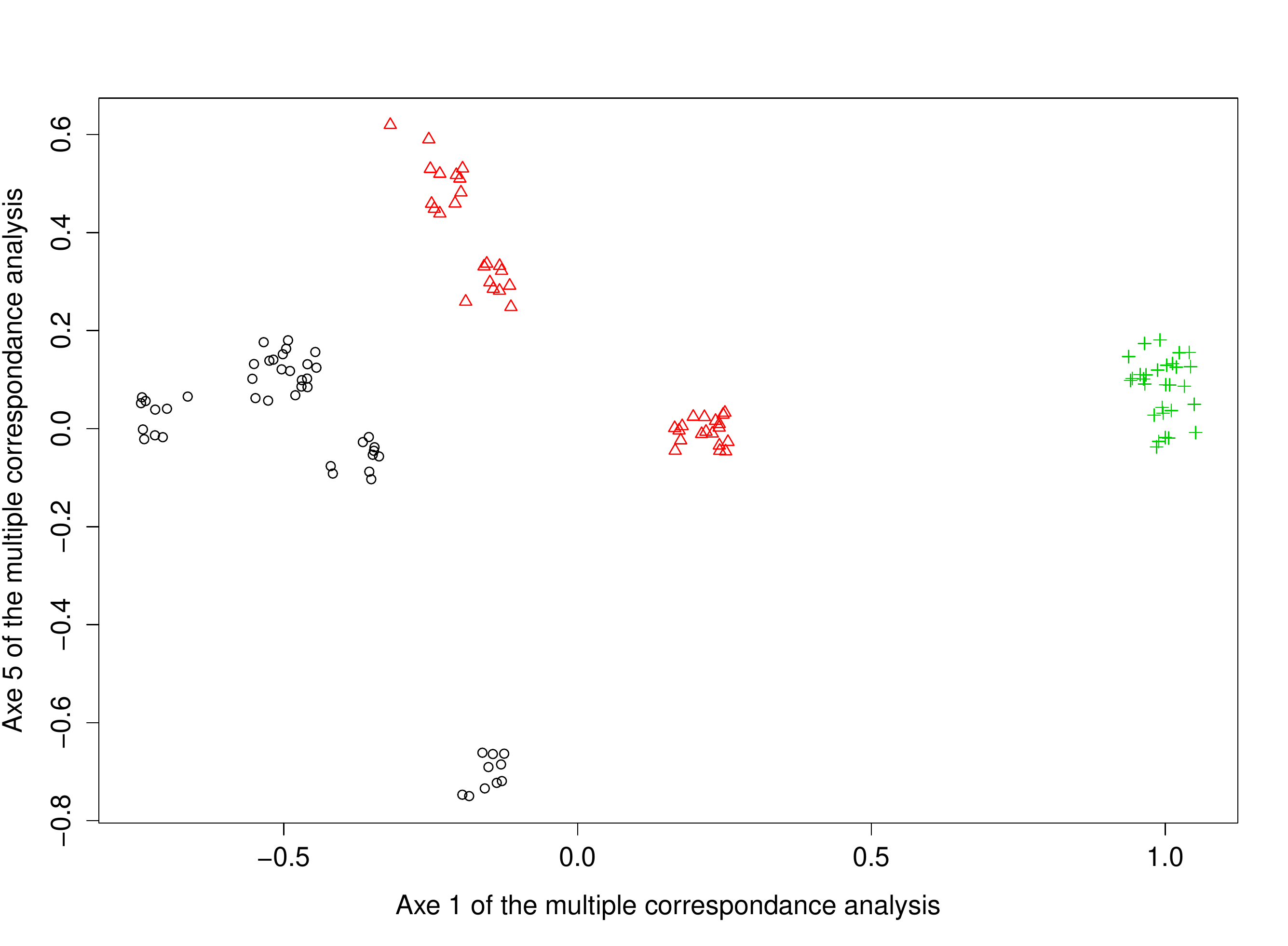} 
\end{center}
\caption{Individuals on the 1-5 principal correspondence analysis map with the best \textsc{cmm} estimated partition. An i.i.d. uniform noise on $[0,0.1]$ has be added on
both axes for each individual in order to improve visualization. Colors and symbols indicate the class membership.\label{diagno}}
\end{figure}

The model \textsc{cmm} with three classes has the following repartition of the variables into blocks: $\boldsymbol{\sigma}=(\{\text{Tmp, Pus, Mic, Bur}\},\{\text{Nau}\},\{\text{Lum}\})$. As shown by the summary $\rho_{kj}$ and $\kappa_{kj}$ displayed in Table~\ref{summ2}, the three classes are concentrated in few modality crossings for the block one and in one location with a probability close to one for the two other blocks.
\begin{table}[h!]
\begin{center}
\begin{tabular}{rccc}
\hline	& Tmp, Nau, Lum, Mic & Pus & Bur\\
\hline Class 1 & 0.41 (1.00)& 1.00 (1.00)& 1.00 (0.99)\\
Class 2 & 0.33 (0.99)& 1.00 (1.00)& 1.00 (1.00)\\
Class 3 & 0.25 (0.99)& 1.00 (1.00)& 1.00 (1.00)\\
\hline
\end{tabular}\\
\end{center}
\caption{Summary of the CMM with three classes: $\kappa_{kj}$ is displayed in plain and $\rho_{kj}$ is displayed in parenthesis.  \label{summ2}}
\end{table}

The following class interpretation is based on the class parameters displayed by Figure~\ref{acute_param}. Note  that the variables \emph{urine pushing} and \emph{burning of urethra} are the most discriminative ones.
\begin{itemize}
\item The majority class (42\%) groups individuals having no nausea and no lumber pain.
\item The second class (34\%) groups individuals having no nausea but lumber pain.
\item The third class (24\%) groups individuals having nausea and lumber pain. Furthermore, these individuals have some fiever and micturition pain.
\end{itemize}

\begin{figure}[h!]
\begin{center}
\includegraphics[scale=0.58]{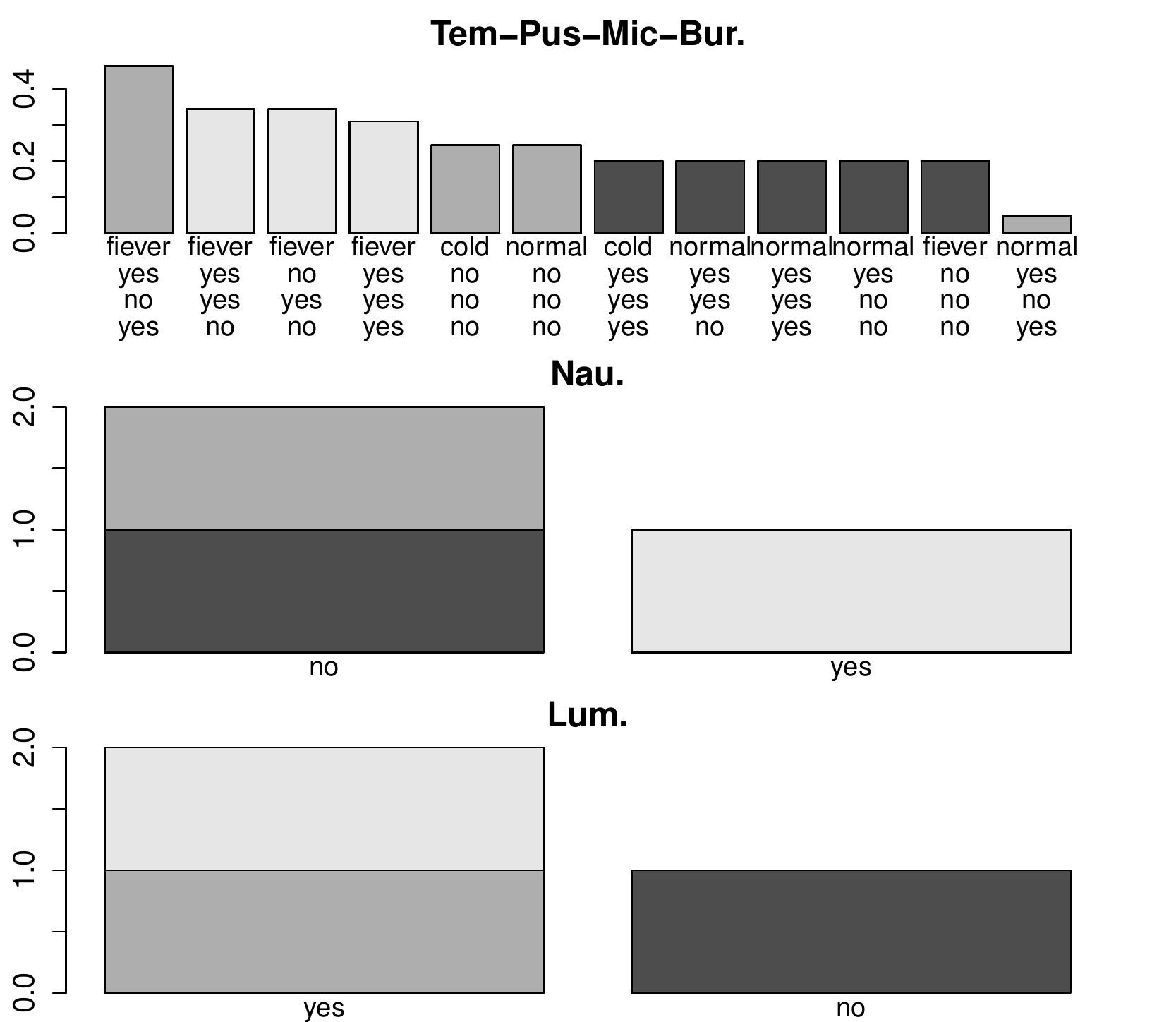} 
\caption{Estimated parameters of the tri-component \textsc{cmm} displayed by the barplot function of the package \texttt{CoModes}. Black color corresponds to class 1, black gray color corresponds to class 2 and pale gray color corresponds to class 3.}\label{acute_param}
\end{center}
\end{figure}

\section{Conclusion} \label{sec::concl}
In this article, we have presented a new mixture model (\textsc{cmm}) to cluster categorical data. Its strength is to relax the conditional independence assumption and to stay parsimonious. A summary of the distribution is given by $\kappa_{kj}$ and $\rho_{kj}$ while each class can be summarized by the mode locations. As shown by the Seabirds application, \textsc{cmm} can improve the results of the classical latent class model even if the conditional independence assumption is true, thank's to its sparsity. 

The combinatorial problems of the block detection and of the modes number selection is solved by a Metropolis-within-Gibbs algorithm and use the computation of the integrated complete-data likelihood. Thus, this approach can be used to select the interactions of the log-linear mixture model per block. The R package \texttt{CoModes} allows to perform the model selection and the parameter's estimation. Both data sets presented in this article are included in this package. To efficiently reduce the computing time, the functions of this package will be soon implemented in C++.

However, the model is hardly estimated if the data set has a large number of variables. Some constraints on the block variables repartition could also be added (for instance the number of variables into blocks could be limit at three variables). Another solution could be to estimate the model by a forward/backward strategy but it is know that these method are sub-optimal.

Finally, we imposed the equality of the repartition of the variables into blocks for all the classes. This property allows us to prove the generic identifiability of \textsc{cmm}. This lack of flexibility is counterbalanced by flexible block distribution. However, one could try to relax the class-equality of $\boldsymbol{\sigma}$ with the model no-identifiability risk.
\bibliography{biblio}

\bibliographystyle{elsarticle-num}

\appendix

\section{Generic identifiability of \textsc{CMM} \label{ident}}
We demonstrate that \textsc{cmm} is generically identifiable, \emph{i.e.} the parameter space where the model is not identifiable has a Lebesgue measure equal to zero. To do this, we adapt the demonstration of the generic identifiability of \textsc{cmm} given by \citep{All09} and based on the Kruskal theorem \citep{Kru77,Kru76}. By using the conditional independence between the blocks of variables, we present a sufficient condition for the generic identifiability of \textsc{cmm} which is a relation between the class number and the mode number.
As this demonstration is just an adaptation of the proof given by \citep{All09}, some technical details are not reminded here. The demonstration is cut into three steps: we start by a reminder of the Kruskal results for the three-way tables, then we show the generic identifiability of \textsc{cmm} with three blocks of variables and we finish by an extension to \textsc{cmm} with more than three blocks.

\paragraph{Kruskal results} For a matrix $M$, the Kruskal rank of $M$, denoted by $\text{rank}_K\; M$ is the largest number $I$ such that every set of $I$ rows of $M$ are linearly independent.\\
T\textsc{heorem 1} (Kruskal \citep{Kru77,Kru76}). Let $I_j=\text{rank}_K\; M_j$. If
\begin{equation*}
I_1 + I_2 +I_3 \geq 2g+2,
\end{equation*}
then the tensor $[M_1, M_2, M_3]$ uniquely determines the $M_j$, up to simultaneous permutation and rescaling rows.

\paragraph{Generic identifiability of \textsc{CMM} with three blocks}
Let  $k_0=\underset{k}{\text{argmin }} \ell_{kj}$ and the matrix $M_j$ where 
\begin{equation}
M_j(k,h)=\alpha_{kj\tau_{k_0j}(h)}.
\end{equation}
By denoting by $\xi_j=\underset{k}{\min }\; \ell_{kj} +1$, generically, we have
\begin{equation*}
\text{rank}_K \; M_j=\min(g,\xi_j).
\end{equation*}
C\textsc{orollary 1} The parameters of \textsc{cmm} with three blocs are generically identifiable, up to label swapping, provided:
\begin{equation*}
\min(g,\xi_1) + \min(g,\xi_2) + \min(g,\xi_3) \geq 2g+2.
\end{equation*}

\paragraph{Generic identifiability of \textsc{cmm} with more than three blocks}
In the same way that \citep{All09}, we generalize the result with $d$ blocks by observing that $d$ blocks of categorical variables can be combined into three categorical variables. Thus, we can apply the Kruskal theorem.\\
C\textsc{orollary 2} We consider a \textsc{cmm} with $d$ blocks where $d\leq 3$. If there exists a tri-partition of the set $\{1,\ldots,d\}$ into three disjoint non empty subsets $S_1$, $S_2$ and $S_3$, such that $\gamma_i=\prod_{j\in S_i}\xi_j$ with
\begin{equation}
\min(g,\gamma_1) + \min(g,\gamma_2) + \min(g,\gamma_3) \geq 2g+2 ,
\end{equation}
then the model parameters are generically identifiable up to label swapping.

\section{Proof of Proposition 1} \label{proofs}
In this Section, a proof of Proposition 1 is given. We firstly define a new parametrization of the block distribution facilitating the integrate complete-data likelihood computation. We secondly define the prior distribution of the new block parametrization according to the other parametrization. Thirdly, we underline the relation between the embedded models. We conclude by the integrate complete-date likelihood computation, which is the target result.

\subsection{New parametrization of the block distribution}
Without loss of generality, we assume that the elements of $\boldsymbol{\delta}_{kj}$ are ordered by decreasing values of the probability mass associated to them and we introduce the new parametrization of $\boldsymbol{a}_{kj}$ denoted $\boldsymbol{\varepsilon}_{kj}$ where $\boldsymbol{\varepsilon}_{kj} \in \mathcal{E}_{kj}=\left[\frac{1}{m_j};1\right]\times,\ldots,\times \left[\frac{1}{\text{m}_j - \ell_{kj}};1\right]$ and where $\varepsilon_{kjh}$ is defined by
\begin{align}
\varepsilon_{kjh} =& \left\{ \begin{array}{rl}
a_{kj\delta_{kjh}} & \text{if } h=1\\
\frac{a_{kj\delta_{kjh}}}{\prod_{h'=1}^{h-1} (1-\varepsilon_{kjh'})} & \text{otherwise}.
\end{array}
\right. \nonumber
\end{align}

\begin{lem}
The conditional probability of $\textbf{X}^j$ is
\begin{align}
p(\textbf{X}^j|\textbf{Z},\ell_{kj},\tilde{\boldsymbol{\delta}}_{kj},\boldsymbol{\varepsilon}_{kj})=
\prod_{h=1}^{\ell_{kj}} (\varepsilon_{kjh})^{\text{n}_{kj(h)}} (1-\varepsilon_{kjh})^{\bar{\text{n}}_{kj}^h},
\end{align}
\end{lem}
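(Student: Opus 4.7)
The plan is to derive the formula by translating the block likelihood successively through the three parametrizations introduced in the paper: the original simplex parameters $\boldsymbol{\alpha}_{kj}$, the mode form $(\boldsymbol{\delta}_{kj},\boldsymbol{a}_{kj})$ of Section~\ref{sec::reparam}, and finally the stick-breaking coordinates $\boldsymbol{\varepsilon}_{kj}$. Conditional independence of the observations given the class labels gives the baseline expression
\begin{equation*}
p(\textbf{X}^j \mid \textbf{Z}, \boldsymbol{\alpha}_{kj}) = \prod_{h=1}^{\text{m}_j} \alpha_{kjh}^{\text{n}_{kjh}},\qquad \text{n}_{kjh} = \sum_{i=1}^{n} \text{z}_{ik}\,\text{x}_i^{jh},
\end{equation*}
which I will split into mode and non-mode contributions via \eqref{dirac}: each mode location $\delta_{kjh}$ contributes $a_{kjh}^{\text{n}_{kj\delta_{kjh}}}$, while the $\text{m}_j-\ell_{kj}$ non-mode locations collectively contribute $\bigl(a_{kj,\ell_{kj}+1}/(\text{m}_j-\ell_{kj})\bigr)^{\bar{\text{n}}_{kj}^{\ell_{kj}}}$.

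Conditioning on $\tilde{\boldsymbol{\delta}}_{kj}$ in the sense of the Dirac approximation of Section~\ref{estimbayes} identifies each $\text{n}_{kj\delta_{kjh}}$ with the ordered count $\text{n}_{kj(h)}$. The next step switches from $\boldsymbol{a}_{kj}$ to $\boldsymbol{\varepsilon}_{kj}$ via the stick-breaking identities
\begin{equation*}
a_{kjh} = \varepsilon_{kjh}\prod_{h'=1}^{h-1}(1-\varepsilon_{kjh'})\quad (h\le \ell_{kj}),\qquad a_{kj,\ell_{kj}+1} = \prod_{h=1}^{\ell_{kj}}(1-\varepsilon_{kjh}),
\end{equation*}
the first being the definition of $\varepsilon_{kjh}$ and the second following from $\sum_{h=1}^{\ell_{kj}+1} a_{kjh} = 1$ by a short telescoping sum.

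The core of the proof is then an exponent-collection step. After substitution, $\varepsilon_{kjh}$ picks up exponent $\text{n}_{kj(h)}$ from the single mode-$h$ factor, whereas $(1-\varepsilon_{kjh})$ appears inside every mode-$h'$ factor with $h'>h$ (each contributing $\text{n}_{kj(h')}$) and inside the non-mode factor (contributing $\bar{\text{n}}_{kj}^{\ell_{kj}}$). The total exponent of $(1-\varepsilon_{kjh})$ is therefore
\begin{equation*}
\sum_{h'=h+1}^{\ell_{kj}} \text{n}_{kj(h')} + \bar{\text{n}}_{kj}^{\ell_{kj}} = \text{n}_k - \sum_{h'=1}^{h} \text{n}_{kj(h')} = \bar{\text{n}}_{kj}^h,
\end{equation*}
matching the exponent stated in the lemma. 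The residual uniform normalisation $(\text{m}_j-\ell_{kj})^{-\bar{\text{n}}_{kj}^{\ell_{kj}}}$ is constant in $\boldsymbol{\varepsilon}_{kj}$ and is carried outside as a multiplicative prefactor, reappearing explicitly as $\left(1/(\text{m}_j-\ell_{kj})\right)^{\bar{\text{n}}_{kj}^{\ell_{kj}}}$ in the integrated formula of Proposition~1.

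No genuine obstacle is expected, since the argument is ultimately a careful change of variables. The one place where mistakes are easy is keeping track of which $(1-\varepsilon_{kjh'})$ factors appear in each $a_{kjh}$; writing out the stick-breaking factors in a triangular array and reading off column sums should make the bookkeeping transparent.
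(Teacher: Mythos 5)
Your proof is correct and follows essentially the same route as the paper's own: write the block likelihood as the multinomial product $\prod_h \alpha_{kjh}^{\text{n}_{kjh}}$, split the mode and non-mode contributions, substitute the stick-breaking identities for $\boldsymbol{\varepsilon}_{kj}$, and collect the exponents of $(1-\varepsilon_{kjh})$ into $\bar{\text{n}}_{kj}^{h}$. If anything, your explicit bookkeeping of the uniform-spread factor $(\text{m}_j-\ell_{kj})^{-\bar{\text{n}}_{kj}^{\ell_{kj}}}$ is more careful than the paper's derivation, which tacitly substitutes the total non-mode mass for the per-modality probability and hence states the lemma without that prefactor, which only reappears later in Proposition~1.
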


\begin{proof}
\begin{align*}
p(\textbf{X}^j|\textbf{Z},\ell_{kj},\tilde{\boldsymbol{\delta}}_{kj},\boldsymbol{\varepsilon}_{kj})
&=p(\textbf{X}^j|\textbf{Z},\ell_{kj},\boldsymbol{\alpha}_{kj})\\
&=\prod_{h=1}^{\text{m}_j} (\alpha_{kjh})^{\text{n}_{kjh}} \\
&=\left[ \prod_{h=1}^{\ell_{kj}} (\alpha_{kj(h)})^{\text{n}_{kj(h)}} \right] \alpha_{kj(\ell_{kj}+1)}^{\bar{n}_{kj}^{\ell_{kj}}}\\
&=\varepsilon_{kj1}^{n_{kj(1)}} \prod_{h=2}^{\ell_{kj}} \left[ \varepsilon_{kjh}^{\text{n}_{kj(h)}} \left(\prod_{h'=1}^{h-1} (1-\varepsilon_{kjh})^{n_{kj(h)}} \right) \right] \prod_{h=1}^{\ell_{kj}} (1-\varepsilon_{kjh})^{\bar{n}_{kj}^{\ell_{kj}}} \\
&=\prod_{h=1}^{\ell_{kj}} (\varepsilon_{kjh})^{n_{kj(h)}} (1-\varepsilon_{kjh})^{\bar{n}_{kj}^h}.
\end{align*}
\end{proof}

\subsection{Prior distribution}
\begin{lem}
The  prior distribution of $\boldsymbol{\varepsilon}_{kj}$ is
\begin{equation}
p(\boldsymbol{\varepsilon}_{kj}|\boldsymbol{\omega},\boldsymbol{\delta}_{kj})=\frac{\text{m}_j}{\text{m}_j - \ell_{kj}}.
\end{equation}
\end{lem}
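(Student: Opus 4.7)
The plan is to compute the density of $\boldsymbol{\varepsilon}_{kj}$ from that of $\boldsymbol{a}_{kj}$ by the change-of-variables formula, using the stick-breaking-like transformation given in the previous lemma.

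First, I would specialize the truncated Dirichlet prior of the footnote to $\gamma_{kjh}=1$, so that the density of $\boldsymbol{a}_{kj}$ reduces to a constant on the truncated simplex $\{a_{kj\delta_{kjh}} \geq a_{kj\ell_{kj}+1}/(\text{m}_j-\ell_{kj})\}$. I would compute the normalization constant by the linear substitution $b_h = a_{kj\delta_{kjh}} - c$ with $c = a_{kj\ell_{kj}+1}/(\text{m}_j-\ell_{kj})$, which reparametrizes the truncated region as a standard simplex in the $b_h$'s (with $\sum b_h = 1 - \text{m}_j c$) times an interval $c \in [0, 1/\text{m}_j]$; integrating along $c$ yields an explicit closed form depending only on $\text{m}_j$ and $\ell_{kj}$, with an extra factor of $\text{m}_j-\ell_{kj}$ coming from the Jacobian of this linear substitution.

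Second, I would invert the recursion defining $\boldsymbol{\varepsilon}_{kj}$ to express $a_{kj\delta_{kjh}} = \varepsilon_{kjh}\prod_{h'<h}(1-\varepsilon_{kjh'})$ and $a_{kj\ell_{kj}+1} = \prod_{h'=1}^{\ell_{kj}}(1-\varepsilon_{kjh'})$. Since each $a_{kj\delta_{kjh}}$ depends only on $\varepsilon_{kj1},\ldots,\varepsilon_{kjh}$, the Jacobian matrix is lower triangular, and its determinant equals the product of the diagonal entries, namely $\prod_{h=1}^{\ell_{kj}-1}(1-\varepsilon_{kjh})^{\ell_{kj}-h}$. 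Multiplying this determinant against the constant prior density on $\boldsymbol{a}_{kj}$ gives the density of $\boldsymbol{\varepsilon}_{kj}$.

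Finally, I would identify the image of the truncated simplex under this map with the product region $\mathcal{E}_{kj}$ by tracing the lower-bound constraint $a_{kj\delta_{kjh}} \geq c$ through the recursion, which translates into a lower bound $1/(\text{m}_j - h + 1)$ on each $\varepsilon_{kjh}$. The telescoping identity $\prod_{h=1}^{\ell_{kj}}\bigl(1 - 1/(\text{m}_j-h+1)\bigr) = (\text{m}_j-\ell_{kj})/\text{m}_j$ then gives $\mathrm{vol}(\mathcal{E}_{kj}) = (\text{m}_j-\ell_{kj})/\text{m}_j$, so $1/\mathrm{vol}(\mathcal{E}_{kj}) = \text{m}_j/(\text{m}_j-\ell_{kj})$, matching the claimed constant. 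The main obstacle is the clean cancellation of the $(1-\varepsilon_{kjh})^{\ell_{kj}-h}$ Jacobian factors against the normalization constant of the truncated Dirichlet prior, which is exactly what forces the resulting density to be flat on $\mathcal{E}_{kj}$; this is the place where the particular truncated structure of the prior is essential rather than cosmetic.
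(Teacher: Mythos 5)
Your overall strategy---pushing the truncated Dirichlet prior on $\boldsymbol{a}_{kj}$ through the stick-breaking map with the genuine change-of-variables formula---is more careful than what the paper does, but it cannot deliver the stated constant, and the breaking point is precisely the step you flag as ``the main obstacle''. The Jacobian you compute, $\prod_{h=1}^{\ell_{kj}-1}(1-\varepsilon_{kjh})^{\ell_{kj}-h}$, is a nonconstant function of $\boldsymbol{\varepsilon}_{kj}$, while the normalization constant of the truncated Dirichlet is a number: a constant cannot cancel $\varepsilon$-dependent factors. Carrying your own computation to the end, the pushforward density is proportional to $\prod_{h=1}^{\ell_{kj}-1}(1-\varepsilon_{kjh})^{\ell_{kj}-h}$ on the image of the truncated simplex, which is flat only when $\ell_{kj}=1$. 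Concretely, for $\text{m}_j=3$, $\ell_{kj}=2$ the truncated region in the coordinates $(a_{kj1},a_{kj2})$ is the triangle with vertices $(1/3,1/3)$, $(0,1)$, $(1,0)$, of area $1/6$, and the pushforward density is $6(1-\varepsilon_{kj1})$, not the constant $3=\text{m}_j/(\text{m}_j-\ell_{kj})$. A second, smaller gap is your identification of the image with the box $\mathcal{E}_{kj}$: only one inclusion holds. With the truncation as defined in the paper's footnote, the point $a=(0.32,0.38,0.30)$ satisfies the mode constraints yet gives $\varepsilon_{kj1}=0.32<1/3$, so the truncated simplex is not mapped into $\mathcal{E}_{kj}$; your telescoping identity correctly gives the volume of the box, but the box is not the support of the pushforward.

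The paper's own proof reaches the constant by a different (and Jacobian-free) route: it identifies $p(\boldsymbol{a}_{kj},\boldsymbol{\delta}_{kj}|\boldsymbol{\omega})=p(\boldsymbol{\varepsilon}_{kj},\boldsymbol{\delta}_{kj}|\boldsymbol{\omega})$, substitutes the stick-breaking expressions into $\prod_h a_{kjh}^{\gamma_{kjh}-1}$ to obtain $\prod_h \varepsilon_{kjh}^{\gamma_{kjh}-1}(1-\varepsilon_{kjh})^{\sum_{h'>h}(\gamma_{kjh'}-1)}$ renormalized over $\mathcal{E}_{kj}$, reads this as independent truncated Beta factors on $[\frac{1}{\text{m}_j-h+1},1]$, and then fixes $\gamma_{kjh}=1$ so that each factor is uniform; the value $\text{m}_j/(\text{m}_j-\ell_{kj})$ is then simply the reciprocal of the volume of $\mathcal{E}_{kj}$ (the one piece of your argument that does match). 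In other words, the lemma holds under the paper's convention of transferring the density across the reparametrization without a Jacobian correction---equivalently, taking the uniform distribution on $\mathcal{E}_{kj}$ as the effective prior on $(\boldsymbol{\delta}_{kj},\boldsymbol{\varepsilon}_{kj})$---and not as the rigorous image of the uniform truncated Dirichlet on $\boldsymbol{a}_{kj}$. If you insist on the rigorous change of variables, as you propose, you must either state the result with the extra factor $\prod_{h=1}^{\ell_{kj}-1}(1-\varepsilon_{kjh})^{\ell_{kj}-h}$ or modify the prior on $\boldsymbol{a}_{kj}$ so that its density is proportional to the reciprocal of that factor; the ``clean cancellation'' you invoke does not exist, so as written your proof has a genuine gap.
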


\begin{proof}
We remind that $\boldsymbol{a}_{kj}|\boldsymbol{\omega} \sim D_{\ell_{kj}+1}^t \Big(1,\ldots,1;\text{m}_j\Big)$ and that
\begin{equation}
p(\boldsymbol{a}_{kj},\boldsymbol{\delta}_{kj}|\boldsymbol{\omega})=
p(\boldsymbol{\alpha}|\boldsymbol{\omega})=
p(\boldsymbol{\varepsilon}_{kj},\boldsymbol{\delta}_{kj}|\boldsymbol{\omega}).
\end{equation}
So, we deduce the pdf of the prior distribution of $\boldsymbol{\varepsilon}_{kj}$
\begin{equation}
p(\boldsymbol{\varepsilon}_{kj}|\boldsymbol{\delta}_{kj},\boldsymbol{\omega})= \frac{
\prod_{h=1}^{\ell_{kj}} (\varepsilon_{kjh})^{\gamma_{kjh}-1} (1-\varepsilon_{kjh})^{\sum_{h'=h+1}^{\ell_{kj}+1} (\gamma_{kjh'}-1)}
}{
\int_{\varepsilon_{kj} \in \mathcal{E}_{kj}} \prod_{h=1}^{\ell_{kj}} (\varepsilon_{kjh})^{\gamma_{kjh}-1} (1-\varepsilon_{kjh})^{\sum_{h'=h+1}^{\ell_{kj}+1} (\gamma_{kjh'}-1)} d\varepsilon_{kj}
}.
\end{equation}
Thus, each $\varepsilon_{kjh}$ follows a truncated Beta distribution on the parameters space $\left[\frac{1}{\text{m}_j-h+1},1\right]$ denoted by $\mathcal{B}e(\gamma_{kjh},\sum_{h'=h+1}^{\ell_{kj}+1} (\gamma_{kjh'}-1)+1)$. To assure the positivity of the parameters of the truncated Beta distributions, we put $\gamma_{kjh}=1$, so 
\begin{equation}
p(\boldsymbol{\varepsilon}_{kj}|\boldsymbol{\delta}_{kj},\boldsymbol{\omega})=\frac{\text{m}_j}{\text{m}_j - \ell_{kj}}.
\end{equation}
\end{proof}

\subsection{Relation between embedded models}
\begin{lem}
Let the model with $\ell_{kj}^{\ominus}$ modes and the parameters $(\tilde{\boldsymbol{\delta}}_{kj}^{\ominus},\boldsymbol{\varepsilon}_{kj}^{\ominus})$ and let the model with $\ell_{kj}$ modes and the parameters $(\tilde{\boldsymbol{\delta}}_{kj},\boldsymbol{\varepsilon}_{kj})$. Both modes are defined as such that $\ell_{kj}^{\ominus}=\ell_{kj}-1$, that the  $\ell_{kj}^{\ominus}$ modes having the largest probabilities have the same locations ($\forall h\in \boldsymbol{\delta}_{kj}^{\ominus},\; h\in \boldsymbol{\delta}_{kj}$) and the same probability masses $(\varepsilon_{kjh}^{\ominus}=\varepsilon_{kjh},\; h<\ell_{kj})$.
These embedded models follow this relation
\begin{equation}
\frac{p(\textbf{X}^j|\textbf{Z},\ell_{kj},\tilde{\boldsymbol{\delta}}_{kj},\boldsymbol{\varepsilon}_{kj})}{p(\textbf{X}^j|\textbf{Z},\ell_{kj}^{\ominus},\tilde{\boldsymbol{\delta}}_{kj}^{\ominus},\boldsymbol{\varepsilon}_{kj}^{\ominus})}
=\frac{ (\text{m}_j - \ell_{kj} +1)^{\bar{\text{n}}_{kj}^{\ell_{kj}-1} -1}  }{ (\text{m}_j - \ell_{kj})^{\bar{\text{n}}_{kj}^{\ell_{kj}}} } (\varepsilon_{\ell_{kj}})^{\text{n}_{kj(\ell_{kj})}} (1-\varepsilon_{\ell_{kj}})^{\bar{\text{n}}_{kj}^{\ell_{kj}}}  . 
\end{equation}
\end{lem}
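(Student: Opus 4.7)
The plan is to apply Lemma 1 (in its fully normalised form, i.e.\ remembering the factor $(m_j-\ell_{kj})^{-\bar{\text{n}}_{kj}^{\ell_{kj}}}$ that arises from splitting the mass $a_{kj,\ell_{kj}+1}$ uniformly over the $m_j-\ell_{kj}$ non-mode cells) to both the numerator and the denominator, then to cancel factors using the hypothesis that $\varepsilon_{kjh}^{\ominus}=\varepsilon_{kjh}$ and $\tilde{\delta}_{kjh}^{\ominus}=\tilde{\delta}_{kjh}$ for $h<\ell_{kj}$. Because the first $\ell_{kj}-1$ mode locations coincide, the ordering function $\tau_{kj}$ agrees on its first $\ell_{kj}-1$ arguments, so the integer statistics $\text{n}_{kj(h)}$ and the cumulative non-mode counts $\bar{\text{n}}_{kj}^{h}$ for $h<\ell_{kj}$ are \emph{the same} in both models. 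That common data is exactly what will let the products telescope.

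First I would write the numerator as
\begin{equation*}
p(\textbf{X}^{j}|\textbf{Z},\ell_{kj},\tilde{\boldsymbol{\delta}}_{kj},\boldsymbol{\varepsilon}_{kj})
=\frac{1}{(m_j-\ell_{kj})^{\bar{\text{n}}_{kj}^{\ell_{kj}}}}\prod_{h=1}^{\ell_{kj}} \varepsilon_{kjh}^{\text{n}_{kj(h)}}(1-\varepsilon_{kjh})^{\bar{\text{n}}_{kj}^{h}},
\end{equation*}
and similarly the denominator with $\ell_{kj}$ replaced by $\ell_{kj}-1=\ell_{kj}^{\ominus}$ and with $m_j-\ell_{kj}$ replaced by $m_j-\ell_{kj}+1$. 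Then I would form the ratio: the factor $\varepsilon_{kjh}^{\text{n}_{kj(h)}}$ cancels for $h=1,\dots,\ell_{kj}-1$, leaving the last-mode factor $\varepsilon_{kj\ell_{kj}}^{\text{n}_{kj(\ell_{kj})}}$. The factors $(1-\varepsilon_{kjh})^{\bar{\text{n}}_{kj}^{h}}$ for $h<\ell_{kj}$ cancel identically, and the last-mode piece $(1-\varepsilon_{kj\ell_{kj}})^{\bar{\text{n}}_{kj}^{\ell_{kj}}}$ survives. The combinatorial prefactors combine into $(m_j-\ell_{kj}+1)^{\bar{\text{n}}_{kj}^{\ell_{kj}-1}}/(m_j-\ell_{kj})^{\bar{\text{n}}_{kj}^{\ell_{kj}}}$, matching (up to the displayed $-1$ in the exponent, which I would flag as a possible typo resolved via the definition of $\bar{\text{n}}$) the stated formula.

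The step that needs real care, and the main obstacle, is accounting for the implicit $\varepsilon$-factors sitting inside the non-mode probability $\alpha_{kj(\ell_{kj}+1)}=\prod_{h'=1}^{\ell_{kj}}(1-\varepsilon_{kjh'})/(m_j-\ell_{kj})$, because the non-mode term of the $\ell_{kj}$-mode model carries an extra $(1-\varepsilon_{kjh'})^{\bar{\text{n}}_{kj}^{\ell_{kj}}}$ for every $h'\leq \ell_{kj}$, whereas the $\ell_{kj}^{\ominus}$-mode model carries $(1-\varepsilon_{kjh'})^{\bar{\text{n}}_{kj}^{\ell_{kj}-1}}$ for $h'\leq \ell_{kj}-1$. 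Using $\bar{\text{n}}_{kj}^{\ell_{kj}}=\bar{\text{n}}_{kj}^{\ell_{kj}-1}-\text{n}_{kj(\ell_{kj})}$, the exponent of $(1-\varepsilon_{kjh'})$ for $h'<\ell_{kj}$ in the ratio becomes $-\text{n}_{kj(\ell_{kj})}$, which is exactly what is needed to cancel the $\prod_{h'<\ell_{kj}}(1-\varepsilon_{kjh'})^{\text{n}_{kj(\ell_{kj})}}$ generated by expanding $a_{kj\ell_{kj}}=\varepsilon_{kj\ell_{kj}}\prod_{h'<\ell_{kj}}(1-\varepsilon_{kjh'})$ in the new mode's contribution $a_{kj\ell_{kj}}^{\text{n}_{kj(\ell_{kj})}}$.

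Once that cancellation is verified, the remaining combinatorial ratio is purely $(m_j-\ell_{kj}+1)^{\bar{\text{n}}_{kj}^{\ell_{kj}-1}}/(m_j-\ell_{kj})^{\bar{\text{n}}_{kj}^{\ell_{kj}}}$ times $\varepsilon_{kj\ell_{kj}}^{\text{n}_{kj(\ell_{kj})}}(1-\varepsilon_{kj\ell_{kj}})^{\bar{\text{n}}_{kj}^{\ell_{kj}}}$, giving the stated identity. No integration or inequality arguments are required; the whole proof is algebraic bookkeeping on the stick-breaking representation provided by $\boldsymbol{\varepsilon}$, which is precisely why that parametrization was introduced in Section~\ref{sec::reparam}.
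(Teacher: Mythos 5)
Your proof is correct and follows essentially the same route as the paper: the paper's own argument reduces the ratio at the $\boldsymbol{\alpha}$ level to the last-mode term $\alpha_{kj(\ell_{kj})}^{\text{n}_{kj(\ell_{kj})}}$ and the two non-mode terms and then switches to the stick-breaking parametrization, which is exactly your cancellation via the (correctly renormalized) form of Lemma~1 including the factor $(\text{m}_j-\ell_{kj})^{-\bar{\text{n}}_{kj}^{\ell_{kj}}}$, and your bookkeeping of the $(1-\varepsilon_{kjh'})$ exponents is the right verification. One clarification on the discrepancy you flag: the exact identity indeed has exponent $\bar{\text{n}}_{kj}^{\ell_{kj}-1}$, and the printed ``$-1$'' cannot be recovered from the definition of $\bar{\text{n}}_{kj}^{h}$ as you suggest; it is the extra factor $1/(\text{m}_j-\ell_{kj}+1)$ that only appears later, in the proof of Proposition~1, when the uniform sum over the $\text{m}_j-\ell_{kj}+1$ candidate locations of the added mode is approximated by its largest term, so the lemma as displayed has absorbed that approximation factor by mistake.
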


\begin{proof}
We start by the following relation
\begin{equation}
\frac{p(\textbf{X}^j|\textbf{Z},\ell_{kj},\boldsymbol{\alpha}_{kj})}{p(\textbf{X}^j|\textbf{Z},\ell_{kj}^{\ominus},\boldsymbol{\alpha}_{kj}^{\ominus})}
= \frac{
\alpha_{kj\ell_{kj}}^{\text{n}_{kj(\ell_{kj})}}(\alpha_{kj\ell_{kj}+1})^{\bar{\text{n}}_{kj}^{\ell_{kj}}}
}{
\alpha_{kj\ell_{kj}}^{\ominus \bar{\text{n}}_{kj}^{\ell_{kj}-1}}
}.
\end{equation}
 Note that, $\varepsilon_{kjh}=\varepsilon_{kjh}^{\ominus}$ when ($h=1,\ldots,\ell_{kj}-1$), since $\alpha_{kj(h)}=\alpha_{kj(h)}^{\ominus}$ and $\tilde{\tau}_{\ell_{kj}}(h)=\tilde{\tau}_{\ell_{kj}-1}(h)$ when ($h=1,\ldots,\ell_{kj}-1$). Then, by using the reparamatrization in $\boldsymbol{\varepsilon}_{kj}$, the proof is completed.
\end{proof}

\subsection{Integrated complete-data likelihood}
The integrated complete-data likelihood is finally approximated, by neglecting the sum over the discrete parameters of the modes locations and by performing the exact computation on the continuous parameters, by
\begin{equation}
p(\textbf{X}^{j}|\textbf{Z},\ell_{kj}) \approx \left( \frac{1}{m_j-\ell_{kj}} \right)^{\bar{\text{n}}_{kj}^{\ell_{kj}}}  \prod_{h=1}^{\ell_{kj}} \frac{Bi\left(\frac{1}{\text{m}_j-h+1};\text{n}_{kj(h)}+1;\bar{\text{n}}_{kj}^{h}+1\right)}{\text{m}_j- h},
\end{equation}
where $Bi(x;a,b)=B(1;a,b)-B(x;a,b)$, $B(x;a,b)$ being the incomplete beta function defined by $B(x;a,b)=\int_0^x w^a(1-w)^b dw$. From the previous expression, its is straightforward to obtain $p(\textbf{X},\textbf{Z}|\boldsymbol{\omega})$.

\begin{proof}[Proof of Proposition 1]
 If, for the model with $\ell_{kj}-1$ modes, the best modes locations are known and given by $\tilde{\boldsymbol{\delta}}_{kj}^{\ominus}$ then the conditional probability of $\textbf{X}^j$ for a model with $\ell_{kj}$ modes is
\begin{equation}
p(\textbf{X}^j|\textbf{Z},\ell_{kj},\tilde{\boldsymbol{\delta}}_{kj}^{\ominus},\boldsymbol{\varepsilon}_{kj})=
\frac{1}{\text{m}_j - \ell_{kj} +1 } 
\sum_{\tau \in \{1,\ldots,\text{m}_j\}\setminus \{\tilde{\boldsymbol{\delta}}_{kj}^{\ominus}\}}
 p(\textbf{X}^j|\textbf{Z},\ell_{kj},\{\tilde{\boldsymbol{\delta}}_{kj}^{\ominus},\tau\},\boldsymbol{\alpha}_{kj}^{\ominus},\boldsymbol{\varepsilon}_{kj}),
\end{equation}
 Thus, by approximating this sum by its maximum element, we obtain that
\begin{equation}
p(\textbf{X}^j|\textbf{Z},\ell_{kj},\tilde{\boldsymbol{\delta}}_{kj}^{\ominus},\boldsymbol{\varepsilon}_{kj})\approx \frac{1}{\text{m}_j - \ell_{kj} +1 } p(\textbf{X}^j|\textbf{Z},\ell_{kj},\tilde{\boldsymbol{\delta}}_{kj},\boldsymbol{\alpha}_{kj}^{\ominus},\boldsymbol{\varepsilon}_{kj}).
\end{equation}
By using the proposition 3, we obtain that:
\begin{equation}
\frac{p(\textbf{X}^j|\textbf{Z},\ell_{kj},\tilde{\boldsymbol{\delta}}_{kj}^{\ominus},\boldsymbol{\varepsilon}_{kj})}{p(\textbf{X}^j|\textbf{Z},\ell_{kj}^{\ominus},\tilde{\boldsymbol{\delta}}_{kj}^{\ominus},\boldsymbol{\varepsilon}_{kj}^{\ominus})}
\approx
\frac{ (\text{m}_j - \ell_{kj} +1)^{\bar{\text{n}}_{kj}^{\ell_{kj}-1}-1}  }{ (\text{m}_j - \ell_{kj})^{\bar{\text{n}}_{kj}^{\ell_{kj}}} } (\varepsilon_{\ell_{kj}})^{\text{n}_{kj(\ell_{kj})}} (1-\varepsilon_{\ell_{kj}})^{\bar{\text{n}}_{kj}^{\ell_{kj}}}  . 
\end{equation}
 As $p(\textbf{X}^{j}|\textbf{Z},\ell_{kj}=0)=(m_j)^{-n_k}$, by applying recursively the previous expression, we obtain that
\begin{equation}
p(\textbf{X}^{j}|\textbf{Z},\ell_{kj},\boldsymbol{\varepsilon}_{kj})\approx
\left( \frac{1}{\text{m}_j-\ell_{kj}} \right)^{\bar{\text{n}}_{kj}^{\ell_{kj}}} \prod_{h=1}^{\ell_{kj}} \frac{(\varepsilon_{kjh})^{\text{n}_{kj(h)}} ( 1 - \varepsilon_{kjh})^{\bar{\text{n}}_{kj}^h}}{\text{m}_j - h +1} .
\end{equation}
\end{proof}

\section{Acute inflammation data set clustering with the R package \texttt{CoModes}} \label{tutorial}
$\#$ Package loading\\
$>$ \texttt{require(CoModes)}\bigskip \\
$\#$ Loading of the data set Acute\\
$>$ \texttt{data(acute)}\bigskip \\
$\#$ Discretization of the first variable to obtain categorical variables\\
$>$ \texttt{acute[acute[,1]}$<$\texttt{37,1]} $<$-\texttt{ 1}\\
$>$ \texttt{acute[acute[,1]}$>$\texttt{38,1]} $<$-\texttt{ 3}\\
$>$ \texttt{acute[acute[,1]}$>$\texttt{3,1]} $<$-\texttt{ 2}\\
$>$ \texttt{acute[,1] <- factor(acute[,1],levels=1:3,}
\flushright{\texttt{labels=c("cold","normal","fiever"))}} \bigskip \\
\flushleft{$\#$ Model selection and parameter estimation of CMM}\\
$>$\texttt{res.CoModes $<$- CoModescluster(acute[,1:6],2)}\bigskip \\
$\#$ Summary of the model\\
$\#$ Table like Table~\ref{summ2}\\
$>$\texttt{summary(res.CoModes)}\bigskip \\
$\#$ Plot of the parameters\\
$\#$ Barplot like Figure~\ref{acute_param}\\
$>$\texttt{barplot(res.CoModes)}\bigskip \\
$\#$ Plot of the individuals in a multiple correspondence analysis map, \\
$\#$ the colors and symbols indicate the class membership estimated by \textsc{cmm}.\\
$\#$ Plot like Figure~\ref{diagno}\\
$>$\texttt{plot(res.CoModes,c(1,5))}\\

\end{document}